\colorlet{shadecolor}{yellow}
\definecolor{myblue}{rgb}{0.0, 0.5, 1.0}
\definecolor{myred}{rgb}{1.0, 0.13, 0.32}
\definecolor{mygreen}{rgb}{0.31, 0.68, 0.07}
\def\BibTeX{{\rm B\kern-.05em{\sc i\kern-.025em b}\kern-.08em
    T\kern-.1667em\lower.7ex\hbox{E}\kern-.125emX}}
\newtheorem{lemma}{Lemma}
\newtheorem{proposition}{Proposition}
\newtheorem{corollary}{Corollary}
\newtheorem{remark}{Remark}
\begin{document}
   \title{\huge Hiding in Plain Sight: RIS-Aided Target Obfuscation in ISAC} 
    	\newgeometry {top=25.4mm,left=19.1mm, right= 19.1mm,bottom =19.1mm}%
\author{Ahmed Magbool,~\IEEEmembership{Graduate Student Member,~IEEE,} 
Vaibhav Kumar,~\IEEEmembership{Member,~IEEE,} 
\\ Marco Di Renzo,~\IEEEmembership{Fellow,~IEEE,} and Mark F. Flanagan,~\IEEEmembership{Senior Member,~IEEE}\vspace{-0.8cm}\thanks{Ahmed Magbool and Mark F. Flanagan are with the School of Electrical and Electronic Engineering, University College Dublin, Dublin 4, D04 V1W8, Ireland (e-mail: ahmed.magbool@ucdconnect.ie, mark.flanagan@ieee.org). \par
Vaibhav Kumar is with Engineering Division, New York University Abu Dhabi (NYUAD), Abu Dhabi 129188, UAE (e-mail: vaibhav.kumar@ieee.org).\par 
Marco Di Renzo is with Universit\'e Paris-Saclay, CNRS, CentraleSup\'elec, Laboratoire des Signaux et Syst\`emes, 3 Rue Joliot-Curie, 91192 Gif-sur-Yvette, France (marco.di-renzo@universite-paris-saclay.fr), and with King's College London, Centre for Telecommunications Research -- Department of Engineering, WC2R 2LS London, United Kingdom (marco.di\_renzo@kcl.ac.uk). \par
The work of A. Magbool and M. F. Flanagan was supported by Research Ireland under Grant Number 13/RC/2077\_P2. \par
The work of M. Di Renzo was supported in part by the European Union through the Horizon Europe project COVER under grant agreement number 101086228, the Horizon Europe project UNITE under grant agreement number 101129618, the Horizon Europe project INSTINCT under grant agreement number 101139161, and the Horizon Europe project TWIN6G under grant agreement number 101182794, as well as by the Agence Nationale de la Recherche (ANR) through the France 2030 project ANR-PEPR Networks of the Future under grant agreement NF-YACARI 22-PEFT-0005, and by the CHIST-ERA project PASSIONATE under grant agreements CHIST-ERA-22-WAI-04 and ANR-23-CHR4-0003-01.}}


\maketitle
\begin{abstract}
Integrated sensing and communication (ISAC) has been identified as a promising technology for the sixth generation (6G) of communication networks. Target privacy in ISAC is essential to ensure that only legitimate sensors can detect the target while keeping it hidden from malicious ones. In this paper, we consider a downlink reconfigurable intelligent surface (RIS)-assisted ISAC system capable of protecting a sensing region against an adversarial detector. The RIS consists of both reflecting and sensing elements, adaptively changing the element assignment based on system needs. To achieve this, we minimize the maximum sensing signal-to-interference-plus-noise-ratio (SINR) at the adversarial detector within sample points in the sensing region, by optimizing the transmit beamformer at the base station, the RIS phase shift matrix, the received beamformer at the RIS, and the division between reflecting and absorptive elements at the RIS, where the latter function as sensing elements. At the same time, the system is designed to maintain a minimum sensing SINR at each monitored location, as well as minimum communication SINR for each user. To solve this challenging optimization problem, we develop an alternating optimization approach combined with a successive convex approximation based method tailored for each subproblem. Our results show that the proposed approach achieves a~\unit[25]{dB} reduction in the maximum sensing SINR at the adversarial detector compared to scenarios without sensing area protection. Also, the optimal RIS element assignment can further improve sensing protection by~\unit[3]{dB} over RISs with fixed element configuration.
\end{abstract}
\begin{IEEEkeywords}
Integrated sensing and communication, reconfigurable intelligent surfaces, beamforming design, alternating optimization, successive convex approximation.
 \end{IEEEkeywords}

\IEEEpeerreviewmaketitle
\section{Introduction} \label{sec:intro}
The sixth generation (6G) of communication networks is envisioned to support various applications that require both precise location information and high data rates simultaneously such as autonomous driving, healthcare monitoring, precision agriculture, and the internet of things (IoT)~\cite{2021_Jiang}. This has motivated the exploration of integrated sensing and communication (ISAC), where sensing and communication are performed simultaneously using shared resources~\cite{2021_Cui,2022_Liu,2022_Zhang3}. However, ISAC introduces tradeoffs between sensing and communication performance. Studies show that reconfigurable intelligent surfaces (RISs) can mitigate these tradeoffs and enhance ISAC, especially in non-line-of-sight (NLoS) scenarios \cite{2024_Magbool2,2025_Magbool,2023_Liu2}.

Beyond sensing and communication, covertness is a critical aspect of 6G networks. Covert communication ensures that transmissions remain undetectable by wardens~\cite{2023_Chen}. Despite the importance of integrated sensing and covert communication, research in this area remains limited. The study in~\cite{2024_Ghosh} explored covert RIS-assisted ISAC, where a BS communicates with multiple legitimate users while ensuring radar detection and covertness constraints. To prevent detection, artificial noise was overlaid with the message signal, keeping the warden’s received signal statistics nearly unchanged. Similarly,~\cite{2024_Hu} proposed a beamforming framework for RIS-assisted ISAC, enhancing both covertness and sensing. Their system enables covert transmission to a legitimate receiver while simultaneously sensing a target and avoiding detection. However, to the best of the authors’ knowledge, the concept of sensing privacy for ISAC remains unexplored in the literature.

A relevant and more practical scenario involves preventing adversarial detectors from identifying the presence of targets, which is crucial for various critical applications. For instance, in military operations, maintaining stealth is essential to ensure the safety of covert missions and to protect high-value assets such as aircraft, naval vessels, and ground units~\cite{2025_Zheng}. Similarly, in surveillance and counter-surveillance, safeguarding sensitive facilities and high-profile individuals from detection is imperative to prevent security breaches and unauthorized monitoring. Moreover, autonomous vehicles must remain undetected by unauthorized systems to ensure privacy and security~\cite{2022_Hataba}.

Furthermore, in RIS-assisted mono-static ISAC systems, when the LoS path between the transmitter and target is blocked, the received echo at the base station (BS) suffers quadruple path loss, significantly weakening sensing capabilities~\cite{2019_Basar}. To mitigate this, the use of active RISs have been explored for their ability to control phase shifts and amplify signals~\cite{2023_Yu2,2023_Zhu,2023_Hao,2023_Li,2024_Zhu,2023_Chen4,2023_Salem}. However, they do not fully overcome quadruple path loss and introduce high computational complexities of the optimization algorithms due to the bi-quadratic nature of the sensing SINR function. To address these limitations, recent studies have integrated radar sensors into RISs, enhancing sensing performance across various applications and system models, including channel reconstruction~\cite{2022_Liu2,2023_Hu,2024_Chen,2023_Qian}, STAR-RIS-assisted ISAC~\cite{2023_Zhang6}, beyond-diagonal RIS-assisted ISAC~\cite{2023_Wang3}, and active RIS-assisted ISAC~\cite{2024_Magbool}. Moreover,~\cite{2022_Albanese} has investigated the use of self-configuring metasurfaces, where the sensing elements are utilized to estimate the user's location, and then the RIS phase shifts are updated based on the new location information. Despite these advancements, the adaptive reflection/sensing assignment of RIS elements in ISAC systems remains an open problem in the literature.

In this paper, we address the challenge of protecting a sensing area from an adversarial detector in an RIS-assisted ISAC system. We focus on beamforming design for a system that communicates with multiple users, senses the presence of a target within a three-dimensional (3D) sensing area, and prevents the sensing area from being monitored by an adversarial detector. The main contributions of the paper are summarized as follows.
\begin{itemize}
    \item We propose a functional adaptation of the RIS elements that enhances communication, sensing, and protection against an adversarial detector within a 3D region. The RIS consists of two dynamically assigned groups of elements: reflecting and absorptive. The reflecting elements simultaneously facilitate communication, enhance sensing, and mitigate adversarial detection, while the absorptive elements receive echo signals for target detection.
    \item We develop a joint beamforming and RIS element allocation framework to minimize the maximum sensing SINR at sample locations within the sensing region at the adversarial detector, while ensuring constraints are met on the transmit power budget, minimum sensing SINR at the RIS across all locations, minimum communication SINR for each user, unit-modulus (UM) constraints at the RIS relative elements, and the unit-norm constraint at the receive beamformer for the absorptive elements at the RIS.
    \item We propose an alternating optimization (AO)-based approach to efficiently solve the resulting optimization problem. Our framework jointly optimizes the transmit beamformer and RIS phase shifts, then jointly optimizes the receive beamformer and RIS element allocation, in an iterative manner. A successive convex approximation (SCA)-based method is employed for each sub-problem.
    \item Simulation results validate the proposed design, demonstrating a significant reduction in the adversarial detector's maximum sensing SINR within the sensing area, compared to RIS-assisted ISAC systems without sensing area protection. In addition, the results highlight the performance gains achieved through the adaptive allocation of RIS elements, as compared to fixed RIS configurations.
\end{itemize}

A related problem was previously addressed in our previous work~\cite{2024_Magbool}. However,~\cite{2024_Magbool} did not focus on the sensing protection aspect, which is a significant part of this work. Also, a single point was monitored in~\cite{2024_Magbool}, whereas this work considers a sensing area. In addition,~\cite{2024_Magbool} assumes a fixed split between RIS reflecting and sensing elements, while we optimize this division in this work. These factors result in a more complex optimization problem, and consequently, more sophisticated SCA-based methods are employed.

The rest of the paper is organized as follows. Section~\ref{sec:sys_model} presents the system model. The optimization problem formulation and our proposed solution are presented in Sections~\ref{sec:prob_form} and~\ref{sec:prop_sol}, respectively. Section~\ref{sec:siml} provides simulation results and comprehensive discussions, and Section~\ref{sec:conc} concludes the paper.

\textit{Notations:} Bold lowercase and uppercase letters denote vectors and matrices, respectively. $\Re \{ \cdot \}$, $\Im \{ \cdot \}$, $|\cdot|$ and $(\cdot)^*$ represent the real part, the imaginary part, the magnitude and the complex conjugate of a complex-valued matrix, respectively. $[\mathbf{a}]_b$ and $[\mathbf{a}]_{b:c}$ represent the $b$-th entry of $\mathbf{a}$ and the vector with entries from $b$ to $c$ of $\mathbf{a}$, respectively, while $[\mathbf{A}]_{b,c}$ is the $(b,c)$-th entry of $\mathbf{A}$. $\|\mathbf{\cdot} \|_2$ and $\|\mathbf{\cdot} \|_\textsc{F}$ represent the Euclidean vector norm and the Frobenius matrix norm, respectively. $\mathbf{(\cdot)}^\mathsf{T}$ and $\mathbf{(\cdot)}^\mathsf{H}$ denote the matrix transpose and matrix conjugate transpose, respectively. $\mathbf{I}_a$ represents the $a \times a$ identity matrix, while $\mathbf{0}_{a\times b}$ denotes a matrix of size $a \times b$ whose elements are all equal to zero. $\text{diag}(\mathbf{a})$ denotes a diagonal matrix with the elements of the vector $\mathbf{a}$ on the main diagonal. For a diagonal matrix $\mathbf{A}$, $\text{diag}(\mathbf{A})$ represents a vector whose entries consist of the diagonal elements of $\mathbf{A}$. $\text{vec}(\cdot)$ and $\text{vec}^{-1}(\cdot)$ represent the column-wise vectorization and inverse column-wise vectorization operators, respectively, while $\otimes$ is the Kronecker product. $\mathbb{C}$ indicates the set of complex numbers and $j = \sqrt{-1}$. $\mathbb{E}\{ \cdot \}$ represents the expectation operator. For the numerical set $ \mathcal{A}$, $\mathcal{A}(b)$ represents the $b$-th smallest element in the set (e.g., for $\mathcal{A} = \{ 2,5,8,9\}$, $\mathcal{A}(2) = 5$), while the notation $|A|$ represents the number of elements in the set. $\mathcal{CN}(\mathbf{A},\mathbf{B})$ represents a complex Gaussian distribution with mean $\mathbf{A}$ and covariance matrix $\mathbf{B}$. Finally, $\mathcal{O}(\cdot)$ represents the big-O notation to denote the computational complexity of an
algorithm.

\vspace{-0.5cm}

\section{System Model} \label{sec:sys_model}
As depicted in Fig.~\ref{fig:system_model}, we consider an RIS-assisted ISAC system designed to simultaneously perform three tasks: (i) transmitting data to $K$ single-antenna communication users, (ii) monitoring the existence of a single point-like target within a 3D region, and (iii) preventing an adversarial detector from monitoring the existence of a target within the same sensing region.

We assume that the direct paths between the BS and the communication users, the sensing region, and the adversarial detector are blocked due to obstacles, and the RIS facilitates signal transmission and reception. The RIS comprises $N$ elements, each with an index in the set $ \mathcal{S} = \{ 1,\dots,N\}$. Out of the $N$ RIS elements, $N_\text{r}$ are reflecting elements with indices in the set $\mathcal{R}$, and $N_\text{a} = N - N_\text{r} $ are absorptive elements with indices in the set $\mathcal{A} = \mathcal{S} \setminus \mathcal{R}$. The absorptive elements forward their received signals to a processing unit connected to the RIS for echo-signal processing and detection. Each element can operate as either a reflecting or an absorptive element, but not both simultaneously. The RIS elements can switch between reflection and absorption modes as dictated by the system design.

In this section, we present the transmitted signal model, channel model and received signal model.

\begin{figure} 
         \centering
         \includegraphics[width=1\columnwidth]{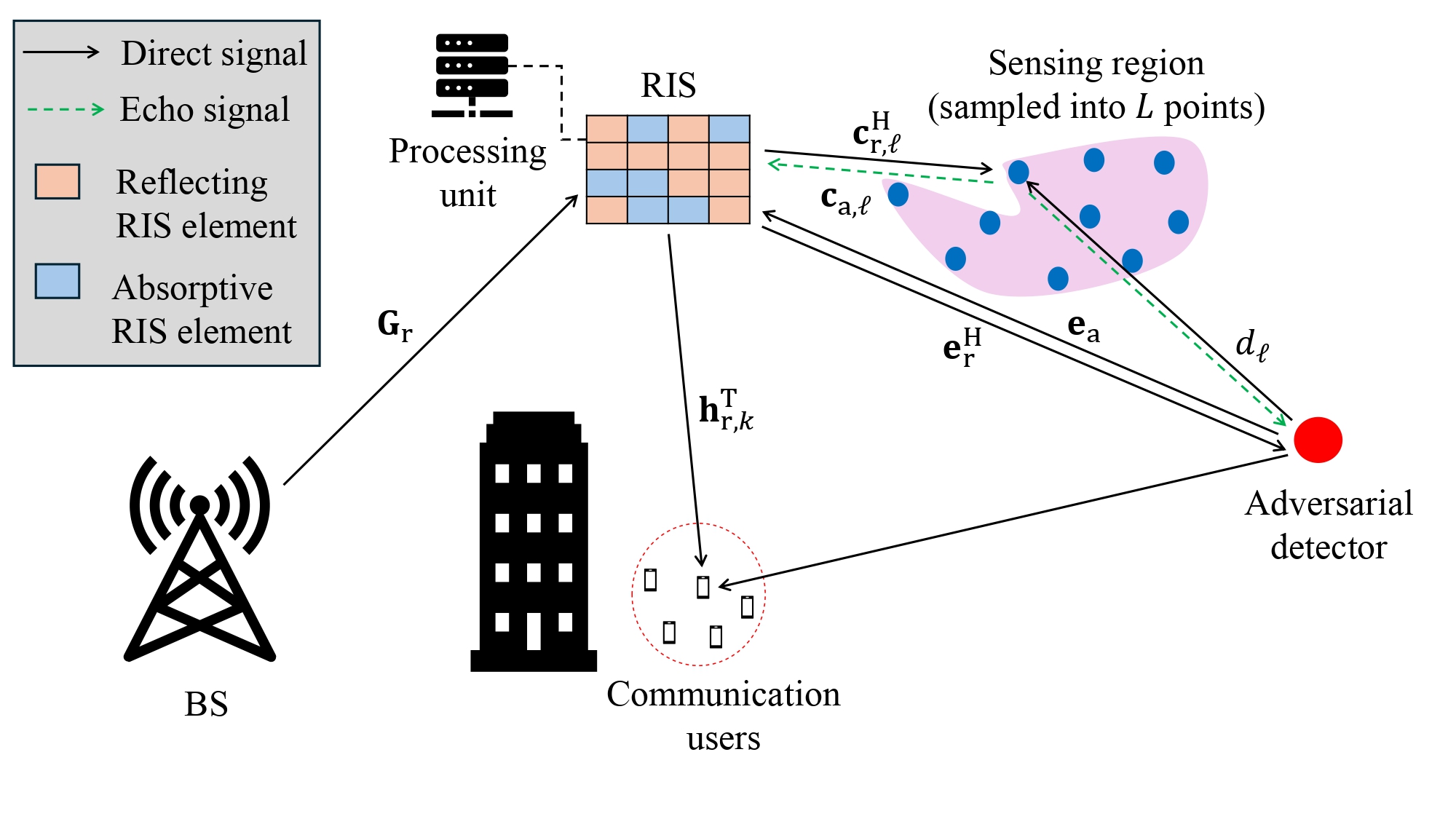}
        \caption{The proposed RIS-assisted system model consisting of a BS communicating with a number of users while detecting the presence of a target within a sensing region and simultaneously protecting the sensing region from an adversarial detector.}
        \label{fig:system_model}
\end{figure}

\subsection{Transmitted Signal Model}
We assume that the channel coherence time, during which the system channels remain unchanged, is divided into $T$ time slots. In the $t$-th time slot, the BS, equipped with $M$ antennas, transmits the following signal:
\begin{equation}
    \mathbf{x}(t) = \mathbf{W} \mathbf{s}(t), \ \  t \in \{ 1,\dots,T \},
\end{equation}
where $\mathbf{s}(t) = [s_1(t), \dots, s_K(t)]^\text{T} \in \mathbb{C}^{K \times 1}$ represents the transmitted communication symbol vector and $s_k(t)$ denotes the communication symbol intended for the $k$-th user, with $\mathbb{E} \{ \mathbf{s}(t)\mathbf{s}^\text{H}(t) \} \sim \mathcal{C}\mathcal{N}(\mathbf{0}_{K \times K}, \mathbf{I}_K)$. Additionally, $\mathbf{w}_k \in \mathbb{C}^{M \times 1}$ represents the transmit beamforming vector for the $k$-th user and $\mathbf{W} = [\mathbf{w}_1, \dots, \mathbf{w}_K] \in \mathbb{C}^{M \times K}$. The adversarial detector, on the other hand, transmits its own sensing signal denoted as $x_\text{d}(t)$, with a power level of $\varrho^2$ for each $t \in \{ 1,\dots,T\}$.

\vspace{-0.5cm}
\subsection{Channel Model}
\begin{table}
\footnotesize
\centering
\caption{Description of system channels.}
\begin{tabular} {|m{2.5cm} | m{0.7cm}| m{1.2cm} |m{2.5cm}| } 
 \hline 
 Channel & symbol & Dimensions & Description  \\  [0.5ex] 
 \hline 
 BS to RIS & $\mathbf{G}$ & $N \times M$ &  \\
 \hline
 BS to RIS reflecting elements & $\mathbf{G}_\text{r}$ &  $N_\text{r} \times M$ & Rows with indices in the set $\mathcal{R}$ of $\mathbf{G}$  \\
    \hline 
    User $k$ to RIS & $\mathbf{h}_k$ & $N \times 1$ &  \\
   \hline
   User $k$ to RIS reflecting elements & $\mathbf{h}_{\text{r},k}$ & $N_\text{r} \times 1$ & Elements with indices in the set $\mathcal{R}$ of $\mathbf{h}_k$ \\
   \hline
    Location $\ell$ to RIS & $\mathbf{c}_\ell$ & $N \times 1$ & LoS channel \\
   \hline
   Location $\ell$ to RIS reflecting elements & $\mathbf{c}_{\text{r},\ell}$ & $N_\text{r} \times 1$ & Elements with indices in the set $\mathcal{R}$ of $\mathbf{c}_\ell$ \\
   \hline
     Location $\ell$ to RIS absorptive elements & $\mathbf{c}_{\text{a},\ell}$ & $N_\text{a} \times 1$ & Elements with indices in the set $\mathcal{A}$ of $\mathbf{c}_\ell$ \\
   \hline
    Adversarial detector to Location $\ell$ & $d_\ell$ & $1 \times 1$ & LoS channel \\
   \hline
   Adversarial detector to RIS & $\mathbf{e}$ & $N \times 1$ & LoS channel \\
   \hline
     Adversarial detector to RIS reflecting elements  & $\mathbf{e}_{\text{r}}$ & $N_{\text{r}} \times 1$ & Elements with indices in the set $\mathcal{R}$ of $\mathbf{e}$ \\
   \hline
      Adversarial detector to RIS absorptive elements  & $\mathbf{e}_{\text{a}}$ & $N_{\text{a}} \times 1$ & Elements with indices in the set $\mathcal{A}$ of $\mathbf{e}$ \\
   \hline
\end{tabular}
\label{tab:channels}
\end{table}

For clarity, we summarize the system channels in Table~\ref{tab:channels}. We assume that the BS to RIS channel, $\mathbf{G}$, and the user $K$ to RIS channel, $\mathbf{h}_k$ (for each $k \in \mathcal{K} = \{1, \dots, K\}$), are available at the BS. Moreover,  we denote the interference from the adversarial detector to the $k$-th user by $f_k(t)$, which is assumed to be a Gaussian random variable (as in~\cite{2022_Jiang}) with mean $\mu_k$ and variance $\eta_k^2$, for all $k \in \mathcal{K}$ and $t \in \{1, \dots, T\}$.

To monitor the existence of targets within the sensing area, we sample this area into $L$ points~\cite{2021_Prasobh}. We assume a uniform planar array (UPA) RIS, where the horizontal and vertical distances between adjacent RIS elements are equal to $\lambda_\text{c}/2$ with $\lambda_\text{c}$ denoting the carrier wavelength. The channel between the RIS and the $\ell$-th location can be represented as~\cite{2024_Ismail,2024_Lyu,2022_Wang}
\begin{equation}
    \mathbf{c}_\ell = \alpha_\ell \mathbf{a} (\vartheta_\ell, \varphi_\ell),
\end{equation}
where $\alpha_\ell$ represents the path gain between the reference element at the RIS and the $\ell$-th location in the sensing region. The steering vector $\mathbf{a} (\vartheta_\ell, \varphi_\ell)$ is a function of the azimuth angle, $\varphi_\ell$, and the elevation angle, $\varphi_\ell$, of the $\ell$-th location relative to the reference element at the RIS. For the element in the $n_x$-th column and $n_y$-th column, the corresponding array response is
\begin{equation}
\begin{split}
     a_{n_x, n_y} (\vartheta_\ell, \varphi_\ell)   & = \exp \Big( j \pi \big[ (n_x - 1) \cos(\vartheta_\ell) \sin (\varphi_\ell) \\
    & +   (n_y - 1) \sin(\vartheta_\ell) \sin (\varphi_\ell)\big] \Big).
    \end{split}
\end{equation}

In a similar manner, the channels between the adversarial detector and the $\ell$-th location in the sensing region, $d_\ell$, for all $\ell \in  \mathcal{L} = \{1,\dots ,L \}$ as well as the channel between the RIS and the adversarial detector, denoted as $\mathbf{e}$, can be represented using the corresponding steering vectors.

\subsection{Received Signal Model}
\textit{1) Communication:} The $k$-th communication user receives the following signal:
\begin{equation}
    y_{\text{c},k}(t) = \mathbf{h}_{\text{r},k}^\text{T}  \boldsymbol{\Phi} \mathbf{G}_\text{r} \mathbf{x} (t)+  \mathbf{h}_{\text{r},k}^\text{T} \boldsymbol{\Phi}
 \mathbf{e}_\text{r} x_d (t)+ f_k(t) + n_{\text{c},k} (t),
\end{equation}
where $\boldsymbol{\Phi} \in \mathbb{C}^{N_\text{r} \times N_\text{r}}$ denotes the RIS phase shift matrix, with $|[\boldsymbol{\Phi}]_{n,n}| = 1$ for each $n \in \mathcal{S}$, and $n_{\text{c},k}(t) \sim \mathcal{C}\mathcal{N}(0, \sigma_{\text{c},k}^2)$ is the complex additive white Gaussian noise (AWGN) at User $k$.

We can obtain the communication SINR for the $k$-th user as
\begin{equation}
\begin{split}
        & \gamma_{\text{c},k} (\mathbf{W}, \boldsymbol{\Phi}, \mathcal{R}) = \\
        & \frac{| \mathbf{h}_{\text{r},k}^\text{T} \boldsymbol{\Phi} \mathbf{G}_\text{r} \mathbf{w}_k|^2}{\sum_{\underset{i\neq k}{i=1}}^K | \mathbf{h}_{\text{r},k}^\text{T} \boldsymbol{\Phi} \mathbf{G}_\text{r} \mathbf{w}_i|^2 + \varrho^2 | \mathbf{h}_{\text{r},k}^\text{T} \boldsymbol{\Phi} \mathbf{e}_\text{r}|^2  +   \bar{\sigma}_{\text{c},k}^2},
 \end{split}
\end{equation}
where $ \bar{\sigma}_{\text{c},k}^2 \triangleq \mu_k^2 + \eta_k^2 + \sigma_{\text{c},k}^2$.

\textit{2) Target Detection:} We first analyze the system's performance in detecting a target at the $\ell$-th location. The $T$ time slots of the coherence time are divided into $T/T_\text{s}$ segments, where $T/T_\text{s}$ is assumed to be an integer, during each of which the system determines the presence or absence of a target at the $\ell$-th location. Here, the target is assumed to remain either present or absent throughout each sensing segment. A large $T_\text{s}$ improves sensing performance in each detection, while a smaller $T_\text{s}$ allows for finer time resolution in detecting the target. We then analyze two cases: when $T_\text{s} = 1$ and when $T_\text{s} > 1$.

\textit{Case 1: $T_\text{s} = 1$}: we denote $\mathcal{H}_{0,\ell}$ and $\mathcal{H}_{1,\ell}$ as the null hypothesis (no target exists at the $\ell$-th location) and the alternative hypothesis (a target exists at the $\ell$-th location), respectively. Then the received signal at the RIS is expressed as
\begin{equation}
\mathbf{y}_{\text{s},\ell} (t) =
\begin{cases}
  \mathbf{y}_{\text{s}}^0 (t), & \mathcal{H}_{0,\ell}, \\
   \mathbf{y}_{\text{s},\ell}^1 (t), & \mathcal{H}_{1,\ell},
\end{cases}
\end{equation}
where
\begin{equation}
    \mathbf{y}_{\text{s}}^0 (t) = \overbrace{\mathbf{e}_\text{a} x_\text{d} (t)}^{\text{Interference from the adversarial detector}} + \mathbf{n}_{\text{s}} (t),
    \label{eq:y_0}
\end{equation}
and 
\begin{equation}
\begin{split}
    \mathbf{y}_{\text{s},\ell}^1 (t) & =  \overbrace{\beta_\ell \mathbf{c}_{\text{a},\ell} \mathbf{c}_{\text{r},\ell}^\text{H} \boldsymbol{\Phi}\mathbf{G}_\text{r} \mathbf{x} (t)}^{\substack{\text{BS-RIS-target-} \\ \text{RIS signal}}} + \overbrace{  \beta_\ell \mathbf{e}_{\text{a},\ell} d_{\ell} x_\text{d} (t)}^{\substack{\text{adversarial detector-target-} \\ \text{RIS signal}}}  \\
  & + \mathbf{e}_\text{a} x_\text{d} (t) + \mathbf{n}_{\text{s}} (t),
  \end{split}
\end{equation}
with $\beta_\ell$ denoting the target radar cross section (RCS) for the $\ell$-th location with $\mathbb{E} \{ | \beta_\ell |^2\} = \varsigma_\ell^2$ and $ \mathbf{n}_{\text{s},\ell} (t) \sim \mathcal{C} \mathcal{N} (\mathbf{0}_{N_\text{a}\times 1}, \sigma_{\text{s}}^2 \mathbf{I}_{N_\text{a}})$ represents the AWGN. We note from~\eqref{eq:y_0} that $\mathcal{H}_{0,\ell}$ is independent of the location $\ell$, so we will refer to it as $\mathcal{H}_{0}$ hereafter.

The processing unit at the RIS uses the unit-norm combiner $\mathbf{u} \in \mathbb{C}^{N_\text{a} \times 1}$ to combine the received signal. The energy of the combined signal follows the distribution:
\begin{equation}
|\mathbf{u}^\text{H} \mathbf{y}_{s,\ell} (t)|^2 \sim
\begin{cases}
 \exp (1 / \omega_0), & \mathcal{H}_0, \\
 \exp (1 / \omega_{1,\ell}), & \mathcal{H}_{1,\ell} ,
\end{cases}
\end{equation}
where $\omega_0 \triangleq \varrho^2 |\mathbf{u}^\text{H}\mathbf{e}_\text{a}|^2 + \sigma_\text{s}^2$ and $\omega_{1,\ell} \triangleq \varsigma_\ell^2 |\mathbf{u}^\text{H} \mathbf{c}_{\text{a},\ell} |^2 ||\mathbf{c}_{\text{r},\ell}^\text{H} \boldsymbol{\Phi}\mathbf{G}_\text{r}\mathbf{W}||_2^2 +\varsigma_\ell^2 \varrho^2 |d_{\ell}|^2 |\mathbf{u}^\text{H}\mathbf{c}_{\text{a},\ell} |^2 +\omega_0 $.

Thus, the processing unit makes its decision based on the following rule:
\begin{equation}
    \frac{1}{\omega_0} \text{exp} \bigg(- \frac{|\mathbf{u}^\text{H} \mathbf{y}_{s,\ell} (t)|^2}{\omega_0}\bigg)  \underset{\mathcal{C}_{1,\ell}}{\overset{\mathcal{C}_{0,\ell}}{\gtrless}}  \frac{1}{\omega_{1,\ell}} \text{exp} \bigg(- \frac{|\mathbf{u}^\text{H} \mathbf{y}_{s,\ell} (t)|^2}{\omega_{1,\ell}}\bigg),
    \label{eq:HT}
\end{equation}
where $\mathcal{C}_{0,\ell}$ and $\mathcal{C}_{1,\ell}$ represent the decisions that no target exists and that a target exists at the $\ell$-th location, respectively. We can rearrange~\eqref{eq:HT} as:
\begin{equation}
    |\mathbf{u}^\text{H} \mathbf{y}_{s,\ell} (t)|^2  \underset{\mathcal{C}_{0,\ell}}{\overset{\mathcal{C}_{1,\ell}}{\gtrless}}  \frac{\omega_{1,\ell} \omega_0}{\omega_{1,\ell} -  \omega_0} \ln \bigg( \frac{\omega_{1,\ell} }{\omega_0} \bigg) \triangleq \bar{\omega}_{\ell}.
\end{equation}

We define the sensing SINR at the RIS absorptive elements when a target is \textit{present} at the $\ell$-th location as:
\begin{equation}
\begin{split}
   &  \gamma_{\text{s},\ell} (\mathbf{W}, \boldsymbol{\Phi}, \mathbf{u}, \mathcal{R}) \\
   & = \frac{ \varsigma_\ell^2 |\mathbf{u}^\text{H} \mathbf{c}_{\text{a},\ell} |^2 ||\mathbf{c}_{\text{r},\ell}^\text{H} \boldsymbol{\Phi}\mathbf{G}_\text{r}\mathbf{W}||_2^2 +\varsigma_\ell^2 \varrho^2 |d_{\ell}|^2 |\mathbf{u}^\text{H}\mathbf{c}_{\text{a},\ell} |^2}{\varrho^2 |\mathbf{u}^\text{H}\mathbf{e}_\text{a}|^2 + \sigma_\text{s}^2}.
    \end{split}
\end{equation}
We then introduce the following proposition.

\begin{proposition} \label{prop1}
Both the false-alarm (FA) probability, $p_\ell$, and the missed-detection (MD) probability, $q_\ell$, for detecting the presence of a target at the $\ell$-th location decrease monotonically as the sensing SINR increases, following the relationships:
\begin{equation}
     p_\ell (\gamma_{\text{s},\ell}) = \big( 1 + \gamma_{\text{s},\ell} \big)^{   - ( 1 + \gamma_{\text{s},\ell}^{-1} )  },
     \label{eq:FAP_SINR}
\end{equation}
and 
\begin{equation}
     q_\ell (\gamma_{\text{s},\ell}) = 1 - \big( 1 + \gamma_{\text{s},\ell} \big)^{-\gamma_{\text{s},\ell}^{-1}},
     \label{eq:MDP_SINR}
\end{equation}
respectively.
\end{proposition}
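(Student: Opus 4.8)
The plan is to first derive the two closed-form expressions and then confirm the monotonicity claims by elementary calculus. First I would translate the decision rule into probabilities: a false alarm is the event $\{|\mathbf{u}^\text{H}\mathbf{y}_{\text{s},\ell}(t)|^2 > \bar{\omega}_\ell\}$ under $\mathcal{H}_0$, and a missed detection is $\{|\mathbf{u}^\text{H}\mathbf{y}_{\text{s},\ell}(t)|^2 < \bar{\omega}_\ell\}$ under $\mathcal{H}_{1,\ell}$. Since the combined energy is exponential with mean $\omega_0$ under $\mathcal{H}_0$ and with mean $\omega_{1,\ell}$ under $\mathcal{H}_{1,\ell}$, the exponential tail and CDF immediately give $p_\ell = \exp(-\bar{\omega}_\ell/\omega_0)$ and $q_\ell = 1 - \exp(-\bar{\omega}_\ell/\omega_{1,\ell})$.

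The key step is to recognize, by comparing the definitions of $\omega_0$, $\omega_{1,\ell}$, and $\gamma_{\text{s},\ell}$, that $\omega_{1,\ell} - \omega_0$ is precisely the numerator of the sensing SINR and $\omega_0$ its denominator, so that $\omega_{1,\ell}/\omega_0 = 1 + \gamma_{\text{s},\ell}$. Setting $r := \omega_{1,\ell}/\omega_0 = 1 + \gamma_{\text{s},\ell}$ and inserting the threshold $\bar{\omega}_\ell = \frac{\omega_{1,\ell}\omega_0}{\omega_{1,\ell}-\omega_0}\ln(\omega_{1,\ell}/\omega_0)$, I obtain
\begin{equation*}
\frac{\bar{\omega}_\ell}{\omega_0} = \frac{r}{r-1}\ln r = \big(1 + \gamma_{\text{s},\ell}^{-1}\big)\ln\big(1+\gamma_{\text{s},\ell}\big), \qquad \frac{\bar{\omega}_\ell}{\omega_{1,\ell}} = \frac{\ln r}{r-1} = \gamma_{\text{s},\ell}^{-1}\ln\big(1+\gamma_{\text{s},\ell}\big).
\end{equation*}
Substituting these back and exponentiating recovers \eqref{eq:FAP_SINR} and \eqref{eq:MDP_SINR} directly.

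For the monotonicity, I would work with $\gamma := \gamma_{\text{s},\ell} > 0$. For $p_\ell$, taking logarithms gives $-\ln p_\ell = (1+\gamma^{-1})\ln(1+\gamma)$; differentiating, the two $1/(1+\gamma)$ contributions combine to $1/\gamma$, leaving derivative $(\gamma - \ln(1+\gamma))/\gamma^2$, which is positive by the elementary inequality $\ln(1+\gamma) < \gamma$, so $p_\ell$ decreases. For $q_\ell$, it is enough to show $h(\gamma) := \gamma^{-1}\ln(1+\gamma)$ decreases; its derivative has numerator $\gamma/(1+\gamma) - \ln(1+\gamma)$, which vanishes at $\gamma = 0$ and has derivative $-\gamma/(1+\gamma)^2 < 0$, hence is negative for $\gamma > 0$, so $h$ and therefore $q_\ell$ decrease.

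I do not expect a serious obstacle; the one non-mechanical insight is the identity $\omega_{1,\ell}/\omega_0 = 1 + \gamma_{\text{s},\ell}$, which is what collapses both threshold-based probabilities into functions of the SINR alone. After that, the monotonicity is a routine sign check that reduces to $\ln(1+\gamma) < \gamma$ and the sign of a single auxiliary function.
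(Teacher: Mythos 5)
Your derivation follows essentially the same route as the paper's Appendix C: compute $p_\ell=\exp(-\bar{\omega}_\ell/\omega_0)$ and $q_\ell=1-\exp(-\bar{\omega}_\ell/\omega_{1,\ell})$ from the exponential tail/CDF, then use the identity $\omega_{1,\ell}/\omega_0=1+\gamma_{\text{s},\ell}$ together with the threshold expression to collapse both into functions of the SINR, and your algebra checks out. Your explicit calculus verification of the monotonicity claims is a welcome addition, since the paper's own proof derives only the closed-form expressions and leaves the ``decrease monotonically'' assertion unverified.
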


\begin{proof}
See Appendix~\ref{AppndC}.
\end{proof}

\begin{remark}
We can observe from Proposition~\ref{prop1} that the worst-case FA and MD probabilities are:
\begin{equation}
     \lim_{ \gamma_{\text{s},\ell} \rightarrow 0} p_\ell (\gamma_{\text{s},\ell}) = e^{-1} \approx 0.37,
\end{equation}
and
\begin{equation}
     \lim_{ \gamma_{\text{s},\ell} \rightarrow 0} q_\ell  (\gamma_{\text{s},\ell}) = 1 - e^{-1} \approx 0.63.
\end{equation}
respectively. 
\end{remark}

\textit{Case 2: $T_\text{s} > 1$:} To improve detection performance, the processing unit at the RIS can make decisions by averaging the energies of the received echo signals over $T_\text{s}$ transmissions instead of relying on a single transmission. The equivalent signal in this case follows the distribution:
\begin{equation}
\frac{1}{T_\text{s}} \sum_{t=1}^{T_\text{s}} |\mathbf{u}^\text{H} \mathbf{y}_{\text{s},\ell}(t) |^2 \sim
\begin{cases} 
\text{Erlang} \Big(T_\text{s},\frac{T_\text{s}}{\omega_0} \Big), & \mathcal{H}_0, \\
\text{Erlang} \Big(T_\text{s},\frac{T_\text{s}}{\omega_{1,\ell}}\Big), & \mathcal{H}_{1,\ell}.
\end{cases}
\end{equation}

We then present the following proposition.

\begin{proposition} \label{col:1}
The FA probability, $\bar{p}_\ell$, and the MD probability, $\bar{q}_\ell$, when averaging the energies of the received echo signals over $T_\text{s}$ transmissions, can be expressed in terms of $p_\ell$ and $q_\ell$ as:
\begin{equation}
\begin{split}
        \bar{p}_\ell =  p_{\ell}^{T_\text{s}} \sum_{i=0}^{T_\text{s}-1}  \frac{\big(-T_\text{s}\ln p_{\ell} \big)^i}{i!} ,
        \end{split}
        \label{eq:FAP2}
\end{equation}
and
\begin{equation}
\begin{split}
        \bar{q}_{\ell} = 1 - (1 - q_{\ell} )^{T_\text{s}} \sum_{i=0}^{T_\text{s}-1}  \frac{\big(- T_\text{s} \ln (1 - q_{\ell})\big)^i}{i!},
        \end{split}
        \label{eq:MDP2}
\end{equation}
respectively.
\end{proposition}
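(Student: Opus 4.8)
The plan is to reduce both statements to the single-transmission quantities $p_\ell$ and $q_\ell$ established in Case~1, by exploiting that the averaged test statistic is Erlang-distributed and that its maximum-likelihood (ML) threshold coincides with the single-sample threshold $\bar{\omega}_\ell$. First I would form the ML decision rule for $Z \triangleq \frac{1}{T_\text{s}}\sum_{t=1}^{T_\text{s}}|\mathbf{u}^\text{H}\mathbf{y}_{\text{s},\ell}(t)|^2$ from the two Erlang densities of shape $T_\text{s}$ and rates $\lambda_0 = T_\text{s}/\omega_0$ and $\lambda_1 = T_\text{s}/\omega_{1,\ell}$. In the likelihood ratio the common factor $z^{T_\text{s}-1}/(T_\text{s}-1)!$ cancels, so the boundary solves $\lambda_0^{T_\text{s}} e^{-\lambda_0 z} = \lambda_1^{T_\text{s}} e^{-\lambda_1 z}$, giving $z = T_\text{s}\ln(\lambda_0/\lambda_1)/(\lambda_0-\lambda_1)$. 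Substituting $\lambda_0/\lambda_1 = \omega_{1,\ell}/\omega_0$ and $\lambda_0-\lambda_1 = T_\text{s}(\omega_{1,\ell}-\omega_0)/(\omega_0\omega_{1,\ell})$ cancels the $T_\text{s}$ factors and leaves exactly $\bar{\omega}_\ell = \frac{\omega_{1,\ell}\omega_0}{\omega_{1,\ell}-\omega_0}\ln(\omega_{1,\ell}/\omega_0)$, so the averaged detector uses the same threshold as in Case~1.

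Next I would evaluate the FA probability as the Erlang complementary CDF at $\bar{\omega}_\ell$ under $\mathcal{H}_0$:
\begin{equation}
\bar{p}_\ell = \Pr\{Z > \bar{\omega}_\ell \mid \mathcal{H}_0\} = e^{-\lambda_0 \bar{\omega}_\ell}\sum_{i=0}^{T_\text{s}-1}\frac{(\lambda_0 \bar{\omega}_\ell)^i}{i!}.
\end{equation}
The concluding step is the substitution linking this to Case~1: the exponential CCDF under $\mathcal{H}_0$ gives $p_\ell = e^{-\bar{\omega}_\ell/\omega_0}$, i.e. $\bar{\omega}_\ell/\omega_0 = -\ln p_\ell$. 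Hence $\lambda_0\bar{\omega}_\ell = (T_\text{s}/\omega_0)\bar{\omega}_\ell = -T_\text{s}\ln p_\ell$ and $e^{-\lambda_0\bar{\omega}_\ell} = p_\ell^{T_\text{s}}$, which upon insertion yields~\eqref{eq:FAP2}.

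The MD probability follows the identical route using the Erlang CDF under $\mathcal{H}_{1,\ell}$:
\begin{equation}
\bar{q}_\ell = \Pr\{Z < \bar{\omega}_\ell \mid \mathcal{H}_{1,\ell}\} = 1 - e^{-\lambda_1\bar{\omega}_\ell}\sum_{i=0}^{T_\text{s}-1}\frac{(\lambda_1\bar{\omega}_\ell)^i}{i!}.
\end{equation}
Here I would use $1-q_\ell = e^{-\bar{\omega}_\ell/\omega_{1,\ell}}$ from Case~1, so $\bar{\omega}_\ell/\omega_{1,\ell} = -\ln(1-q_\ell)$, giving $\lambda_1\bar{\omega}_\ell = -T_\text{s}\ln(1-q_\ell)$ and $e^{-\lambda_1\bar{\omega}_\ell} = (1-q_\ell)^{T_\text{s}}$; substituting recovers~\eqref{eq:MDP2}. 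The main obstacle, and the only non-routine step, is recognizing that the ML threshold for the averaged Erlang statistic is unchanged from Case~1, since this is exactly what permits reusing the relations $\bar{\omega}_\ell/\omega_0 = -\ln p_\ell$ and $\bar{\omega}_\ell/\omega_{1,\ell} = -\ln(1-q_\ell)$. Once this threshold invariance is established, both identities reduce to inserting the Erlang tail expressions and simplifying.
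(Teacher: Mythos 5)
Your proof is correct and follows essentially the same route as the paper: evaluate the Erlang tail (respectively, CDF) at the threshold $\bar{\omega}_\ell$ with rates $T_\text{s}/\omega_0$ and $T_\text{s}/\omega_{1,\ell}$, then substitute the Case~1 relations $p_\ell = e^{-\bar{\omega}_\ell/\omega_0}$ and $1-q_\ell = e^{-\bar{\omega}_\ell/\omega_{1,\ell}}$. The one element you add is an explicit verification that the ML threshold for the averaged Erlang statistic coincides with the single-sample threshold $\bar{\omega}_\ell$ (the common factor $z^{T_\text{s}-1}/(T_\text{s}-1)!$ cancels in the likelihood ratio), a step the paper uses implicitly without justification; this is a welcome tightening rather than a divergence.
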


\begin{proof}
The FA and MD probabilities when using $T_\text{s}$ samples can be computed using the probability density function (PDF) of the Erlang distribution as:
\begin{equation}
\begin{split}
        \bar{p}_\ell & =  \exp \Big(-T_\text{s} \frac{\bar{\omega}_{\ell}}{\omega_0}\Big) \sum_{i=0}^{T_\text{s}-1} \frac{1}{i!}  \Big(T_\text{s} \frac{\bar{\omega}_{\ell}}{\omega_0}\Big)^i,
        \end{split}
        \label{eq:FAP21}
\end{equation}
and
\begin{equation}
\begin{split}
        \bar{q}_{\ell} & =  1 - \exp \Big(-T_\text{s} \frac{\bar{\omega}_{\ell}}{\omega_{1,\ell}}\Big) \sum_{i=0}^{T_\text{s}-1} \frac{1}{i!}  \Big(T_\text{s} \frac{\bar{\omega}_{\ell}}{\omega_{1,\ell}}\Big)^i.
        \end{split}
        \label{eq:MDP21}
\end{equation}
respectively. By substituting the expression $p_\ell =\exp \Big(- \frac{\bar{\omega}_{\ell}}{\omega_0} \Big)$ (see~\eqref{eq:FAP} in Appendix~\ref{AppndC}) into~\eqref{eq:FAP21}, we obtain~\eqref{eq:FAP2}. Similarly,~\eqref{eq:MDP2} can be obtained by substituting $q_\ell = 1 - \exp  \Big(- \frac{\bar{\omega}_{\ell}}{\omega_{1,\ell}}\Big) $ (see~\eqref{eq:MDP} in Appendix~\ref{AppndC}) into~\eqref{eq:MDP21}.
\end{proof}

Therefore, the minimum sensing SINR requirement can be designed based on the desired FA and MD probabilities, given the number of samples $T_\text{s}$. This requirement will be included as a constraint later in the optimization problem.

The system applies the same concept to detect the presence of a target across the $L$ locations. As a result, a single threshold for all locations can be obtained as:
\begin{equation}
    \hat{\omega} \triangleq \min_{\ell \in \mathcal{L} } \bar{\omega}_{\ell}.
\end{equation}
This threshold is applied to make a decision on whether a target is present at \textit{any} of the $L$ locations. 
 
\textit{3) Protection against the adversarial detector:} The received echo signal at the adversarial detector from the $\ell$-th location is

\begin{equation}
y_{\text{d},\ell} (t) =
\begin{cases}
  y_{\text{d}}^0 (t), & \mathcal{H}_{0}, \\
y_{\text{d},\ell}^1 (t), & \mathcal{H}_{1,\ell},
\end{cases}
\end{equation}
where
\begin{equation}
      y_{\text{d}}^0 (t) = \overbrace{\mathbf{e}_\text{r}^\text{H} \boldsymbol{\Phi} \mathbf{G}_\text{r} \mathbf{x}(t)}^{\text{Interference from the RIS}} + n_\text{d} (t),
\end{equation}

\begin{equation}
\begin{split}
      y_{\text{d},\ell}^1 (t) & =  \overbrace{\beta_\ell |d_{\ell}|^2 x_\text{d} (t)}^{\substack{\text{Adversarial detector-target-} \\ \text{adversarial detector signal}}} + \overbrace{\beta_\ell d_{\ell} \mathbf{c}_{\text{r},\ell}^\text{H} \boldsymbol{\Phi} \mathbf{G}_\text{r} \mathbf{x} (t)}^{\substack{\text{BS-RIS-target-} \\ \text{adversarial detector signal}}}  \\
      & + \mathbf{e}_\text{r}^\text{H} \boldsymbol{\Phi} \mathbf{G}_\text{r} \mathbf{x}(t) + n_\text{d} (t),
      \end{split}
\end{equation}
 and $ n_\text{d} (t) \sim \mathcal{C} \mathcal{N}(0, \sigma_\text{d}^2) $ is the AWGN at the adversarial detector.

We then express the sensing SINR at the adversarial detector when a target is \textit{present} at the $\ell$-th location as
\begin{equation}
    \gamma_{\text{d},\ell} (\mathbf{W}, \boldsymbol{\Phi}, \mathcal{R}) = \frac{\varsigma_\ell^2 \varrho^2 |d_\ell|^4 + \varsigma_\ell^2 |d_\ell|^2 || \mathbf{c}_{\text{r},\ell}^\text{H} \boldsymbol{\Phi} \mathbf{G}_\text{r} \mathbf{W}||_2^2}{|| \mathbf{e}_{\text{r}}^\text{H} \boldsymbol{\Phi}\mathbf{G}_\text{r} \mathbf{W}||_2^2+ \sigma_\text{d}^2} .
\end{equation}
The FA and MD probabilities at the adversarial detector can be also expressed in terms of $\gamma_{\text{d},\ell}$ following Proposition~\ref{prop1}.

\section{Problem Formulation}  \label{sec:prob_form}
In this paper, our goal is to design the transmit beamformer, RIS phase shift matrix, receive beamformer, and RIS element assignments (reflecting or absorptive) to maximize the FA and MD probabilities at the adversarial detector across the $L$ locations. Based on Proposition~\ref{prop1}, both the FA and MD probabilities increase monotonically as the sensing SINR decreases. Therefore, we minimize the sensing SINR at the adversarial detector across all locations, while satisfying constraints on the system's power budget, the sensing SINRs for all locations, the communication SINRs for all users, the UM constraint at the RIS, and the unit-norm constraint on the receive beamformer. The optimization problem can be formulated as follows:
\begin{subequations}
\begin{align}
              \min_{\mathbf{W}, \boldsymbol{\Phi}, \mathbf{u}, \mathcal{R} }    \ \   &   \max_{\ell \in \mathcal{L}} \gamma_{\text{d},\ell} (\mathbf{W}, \boldsymbol{\Phi}, \mathcal{R}) \label{obj_fun} \\
          \text{s.t.} \ \ & || \mathbf{W}||_\text{F}^2 \leq P_\text{max},  \label{power_cons} \\
          &  \gamma_{\text{s},\ell} (\mathbf{W}, \boldsymbol{\Phi}, \mathbf{u}, \mathcal{R}) \geq \Gamma_{\text{s},\ell}, \ \forall \ell \in \mathcal{L},  \label{sens_cons} \\
          &  \gamma_{\text{c},k} (\mathbf{W}, \boldsymbol{\Phi}, \mathcal{R}) \geq \Gamma_{\text{c},k}, \ \ \forall k \in \mathcal{K},  \label{comm_cons} \\
           & \big| [\mathbf{\Phi}]_{n,n} \big| = 1, \  \forall n \in \mathcal{R} \label{RIS_cons}, \\
        & \| \mathbf{u} \|_2 = 1,  \label{rec_BF_cons} 
\end{align} 
\label{eq:main_opt}
\end{subequations}%
\hspace{-0.45cm} where $P_\text{max}$, $\Gamma_{\text{s},\ell}$ and $\Gamma_{\text{c},k}$ represent the transmit power budget, the sensing SINR threshold for the $\ell$-th location, and the communication SINR threshold for the $k$-th user, respectively.

Obtaining an optimal solution to the optimization problem~\eqref{eq:main_opt} is very challenging due to the non-convex nature of~\eqref{eq:main_opt}, the coupling between the optimization variables in~\eqref{obj_fun},~\eqref{sens_cons} and~\eqref{comm_cons}, the fractional forms in~\eqref{obj_fun},~\eqref{sens_cons} and~\eqref{comm_cons}, the constraints~\eqref{RIS_cons} and~\eqref{rec_BF_cons}, which are difficult to handle directly, and also due to the fact that the sets $\mathcal{R}$ and $\mathcal{A}$ affect the structure of the channels and optimization variables. In the following section, we propose an AO framework that unitizes a SCA-based method to obtain a stationary solution to~\eqref{eq:main_opt}.

\section{Proposed Solution}  \label{sec:prop_sol}
In this section, we present an SCA-based framework to obtain a stationary solution to the optimization problem in~\eqref{eq:main_opt}. The SCA method approximates the non-convex optimization problem locally with a convex one, based on the solution obtained from the previous iteration.

\subsection{Transmit Beamformer and RIS Phase Shift Matrix} \label{sec:W_PHI}
We first assume that the value of $\mathbf{u}$ and the set $\mathcal{R}$ are fixed, and we aim to find the optimal $\mathbf{W}$ and $\boldsymbol{\Phi}$. This optimization problem can be expressed as follows:
\begin{subequations}
\begin{align}
             \min_{\mathbf{W}, \boldsymbol{\Phi} }   \ \    &  \max_{\ell \in \mathcal{L}} \gamma_{\text{d},\ell} (\mathbf{W}, \boldsymbol{\Phi}) \label{obj_fun_sp1} \\
          \text{s.t.} \ \ & \eqref{power_cons}, \eqref{sens_cons}, \eqref{comm_cons},  \eqref{RIS_cons}.
\end{align} 
\label{eq:opt_sp1}
\end{subequations}%

To address~\eqref{eq:opt_sp1}, we first introduce a new variable $\boldsymbol{\psi}$, which combines $\mathbf{W}$ and $\boldsymbol{\Phi}$ as follows\footnote{For the remainder of the subsection, we will use $\boldsymbol{\psi}$ as an argument of functions in place of $\boldsymbol{\Phi}$ and $\mathbf{W}$ for brevity, as a direct one-to-one relationship exists between them. This also applies to $\boldsymbol{\psi}_0$ (in place of $\boldsymbol{\Phi}_0$ and $\mathbf{W}_0$) and $\boldsymbol{\psi}^{(\tau)}$ (in place of $\boldsymbol{\Phi}^{(\tau)}$ and $\mathbf{W}^{(\tau)}$), which will be introduced later in the paper.
}:
\begin{equation}
    \boldsymbol{\psi} \triangleq \begin{bmatrix}
        \text{vec} (\mathbf{W}) \\
        \text{diag} (\boldsymbol{\Phi}^*)
    \end{bmatrix}.
\end{equation}
We then introduce the following lemmas.

\begin{lemma}  \label{lemma:1}
For an arbitrary constant vector $\mathbf{a}$, a concave lower bound on $ \| \mathbf{a}^\textsc{H} \boldsymbol{\Phi} \mathbf{G}_\text{r} \mathbf{W}\|_2^2$ around the point $    \boldsymbol{\psi}_0 = \begin{bmatrix}
        \text{vec} (\mathbf{W}_0) \\
        \text{diag} (\boldsymbol{\Phi}_0^*)
    \end{bmatrix}$ is given by
\begin{equation}
\begin{split}
    & \| \mathbf{a}^\textsc{H} \boldsymbol{\Phi} \mathbf{G}_\text{r} \mathbf{W}\|_2^2 \geq \Omega (\mathbf{a}, \boldsymbol{\psi}, \boldsymbol{\psi}_0 )  \triangleq  - \frac{1}{2}  \|\boldsymbol{\Xi}_1 (\mathbf{a}, \boldsymbol{\psi}_0) \boldsymbol{\psi} \|_2^2 \\
    & + \Re \bigl\{ \boldsymbol{\psi}_0^\textsc{H} \boldsymbol{\Xi}_2^\textsc{H} (\mathbf{a}, \boldsymbol{\psi}_0)  \boldsymbol{\Xi}_2 (\mathbf{a}, \boldsymbol{\psi}_0)\boldsymbol{\psi} \bigl\}  - \frac{1}{2} \|\boldsymbol{\Xi}_2 (\mathbf{a}, \boldsymbol{\psi}_0) \boldsymbol{\psi}_0 \|_2^2 \\
    & - \| \boldsymbol{\Xi}_3 (\mathbf{a}, \boldsymbol{\psi}_0)\boldsymbol{\psi}_0 \|_2^2 ,
    \end{split}
    \label{eq:Lemm1}
\end{equation}
where 
\begin{subequations}
\begin{align}
 \boldsymbol{\Xi}_1 (\mathbf{a}, \boldsymbol{\psi}_0)  & \triangleq \begin{bmatrix}
    \text{diag}(\boldsymbol{\Phi_0})^\textsc{H} \mathbf{A}^\textsc{T} \mathbf{G}_\text{r}^* \mathbf{W}_0^* \otimes \mathbf{I}_M  &  - \mathbf{G}_\text{r}^\textsc{H} \mathbf{A}
\end{bmatrix}, \\
 \boldsymbol{\Xi}_2 (\mathbf{a}, \boldsymbol{\psi}_0)  & \triangleq \begin{bmatrix}
    \text{diag}(\boldsymbol{\Phi_0})^\textsc{H} \mathbf{A}^\textsc{T} \mathbf{G}_\text{r}^* \mathbf{W}_0^* \otimes \mathbf{I}_M &  \mathbf{G}_\text{r}^\textsc{H} \mathbf{A}
\end{bmatrix}, \\
 \boldsymbol{\Xi}_3 (\mathbf{a}, \boldsymbol{\psi}_0) & \triangleq \begin{bmatrix}
    \mathbf{0} & \mathbf{W}_0^\textsc{H} \mathbf{G}_\text{r}^\textsc{H} \mathbf{A}
\end{bmatrix},
\end{align}
\end{subequations}
and $\mathbf{A}  \triangleq \text{diag} (\mathbf{a})$.
\end{lemma}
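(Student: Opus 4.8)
The plan is to recognize that $f(\boldsymbol{\psi}) \triangleq \| \mathbf{a}^\textsc{H} \boldsymbol{\Phi} \mathbf{G}_\text{r} \mathbf{W}\|_2^2$ is the squared norm of a vector that is \emph{bilinear} in the two blocks of $\boldsymbol{\psi}$, and then to minorize it by two successive quadratic bounds, each of which is tight at $\boldsymbol{\psi}_0$. Concavity of the resulting surrogate is then read off from its form, and the whole difficulty reduces to identifying $\boldsymbol{\Xi}_1,\boldsymbol{\Xi}_2,\boldsymbol{\Xi}_3$ with three natural linear maps.

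First I would rewrite the inner quantity in terms of $\boldsymbol{\psi}$. Because $\boldsymbol{\Phi}$ is diagonal, $(\mathbf{a}^\textsc{H}\boldsymbol{\Phi}\mathbf{G}_\text{r})^\textsc{H} = \mathbf{G}_\text{r}^\textsc{H}\boldsymbol{\Phi}^\textsc{H}\mathbf{a} = \mathbf{G}_\text{r}^\textsc{H}\mathbf{A}\,\text{diag}(\boldsymbol{\Phi}^*) \triangleq \mathbf{t}$, which is linear in the second block of $\boldsymbol{\psi}$. Hence $\mathbf{a}^\textsc{H}\boldsymbol{\Phi}\mathbf{G}_\text{r}\mathbf{W} = \mathbf{t}^\textsc{H}\mathbf{W}$ and $f = \|\mathbf{W}^\textsc{H}\mathbf{t}\|_2^2 = \|\mathbf{b}\|_2^2$, where $\mathbf{b}\triangleq\mathbf{W}^\textsc{H}\mathbf{t}$ is bilinear in the blocks $\text{vec}(\mathbf{W})$ and $\text{diag}(\boldsymbol{\Phi}^*)$. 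Writing $\mathbf{b}_0 \triangleq \mathbf{W}_0^\textsc{H}\mathbf{t}_0$ for its value at $\boldsymbol{\psi}_0$ (with $\mathbf{t}_0 = \mathbf{G}_\text{r}^\textsc{H}\mathbf{A}\,\text{diag}(\boldsymbol{\Phi}_0^*)$) and using $\text{vec}(\mathbf{W}\mathbf{b}_0) = (\mathbf{b}_0^\textsc{T}\otimes\mathbf{I}_M)\text{vec}(\mathbf{W})$, a direct check confirms that $\boldsymbol{\Xi}_3(\mathbf{a},\boldsymbol{\psi}_0)\boldsymbol{\psi}_0 = \mathbf{b}_0$, that $\boldsymbol{\Xi}_2(\mathbf{a},\boldsymbol{\psi}_0)\boldsymbol{\psi} = \mathbf{W}\mathbf{b}_0 + \mathbf{t}$, and that $\boldsymbol{\Xi}_1(\mathbf{a},\boldsymbol{\psi}_0)\boldsymbol{\psi} = \mathbf{W}\mathbf{b}_0 - \mathbf{t}$; that is, the three $\boldsymbol{\Xi}$ matrices realize exactly the ``value'', ``sum'', and ``difference'' linear maps attached to $\mathbf{b}$.

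Next I would assemble the bound in two steps. By the parallelogram identity, $\tfrac12\|\boldsymbol{\Xi}_2\boldsymbol{\psi}\|_2^2 - \tfrac12\|\boldsymbol{\Xi}_1\boldsymbol{\psi}\|_2^2 = 2\Re\{(\mathbf{W}\mathbf{b}_0)^\textsc{H}\mathbf{t}\} = 2\Re\{\mathbf{b}_0^\textsc{H}\mathbf{b}\}$ exactly. This expression is still bilinear (hence non-concave) because of the $+\tfrac12\|\boldsymbol{\Xi}_2\boldsymbol{\psi}\|_2^2$ term, so I would convexify only that term by its first-order tangent minorant, $\tfrac12\|\boldsymbol{\Xi}_2\boldsymbol{\psi}\|_2^2 \geq \Re\{(\boldsymbol{\Xi}_2\boldsymbol{\psi}_0)^\textsc{H}\boldsymbol{\Xi}_2\boldsymbol{\psi}\} - \tfrac12\|\boldsymbol{\Xi}_2\boldsymbol{\psi}_0\|_2^2$, leaving $-\tfrac12\|\boldsymbol{\Xi}_1\boldsymbol{\psi}\|_2^2$ plus an affine term, i.e. a concave function. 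Using $\|\boldsymbol{\Xi}_3\boldsymbol{\psi}_0\|_2^2 = \|\mathbf{b}_0\|_2^2$, this gives $2\Re\{\mathbf{b}_0^\textsc{H}\mathbf{b}\} \geq \Omega(\mathbf{a},\boldsymbol{\psi},\boldsymbol{\psi}_0) + \|\mathbf{b}_0\|_2^2$. Finally, completing the square yields the elementary bound $f = \|\mathbf{b}\|_2^2 \geq 2\Re\{\mathbf{b}_0^\textsc{H}\mathbf{b}\} - \|\mathbf{b}_0\|_2^2$, since this is merely $\|\mathbf{b}-\mathbf{b}_0\|_2^2\geq0$. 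Chaining the two inequalities delivers $f \geq \Omega$, and concavity of $\Omega$ is immediate from its form.

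The last thing I would verify is tightness at the expansion point, which is what makes $\Omega$ a legitimate SCA surrogate: at $\boldsymbol{\psi}=\boldsymbol{\psi}_0$ we have $\mathbf{b}=\mathbf{b}_0$, so $\|\mathbf{b}-\mathbf{b}_0\|_2^2=0$ and the tangent minorant of $\tfrac12\|\boldsymbol{\Xi}_2\boldsymbol{\psi}\|_2^2$ is exact; both gaps additionally have vanishing gradient there, so $\Omega$ matches $f$ to first order. I expect the main obstacle to be the bookkeeping of the second and third steps above: correctly threading the conjugations and the diagonal structure of $\boldsymbol{\Phi}$ through the vectorization/Kronecker identity so as to prove that the compact matrices $\boldsymbol{\Xi}_1,\boldsymbol{\Xi}_2,\boldsymbol{\Xi}_3$ are precisely the difference/sum/value maps. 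Once that identification is secured, the two bounding inequalities are entirely standard.
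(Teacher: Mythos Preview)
Your proposal is correct and follows essentially the same route as the paper's proof: rewrite $\|\mathbf{a}^\textsc{H}\boldsymbol{\Phi}\mathbf{G}_\text{r}\mathbf{W}\|_2^2=\|\mathbf{b}\|_2^2$ with $\mathbf{b}=\mathbf{W}^\textsc{H}\mathbf{G}_\text{r}^\textsc{H}\mathbf{A}\,\text{diag}(\boldsymbol{\Phi}^*)$, apply the first-order minorant $\|\mathbf{b}\|_2^2\geq 2\Re\{\mathbf{b}_0^\textsc{H}\mathbf{b}\}-\|\mathbf{b}_0\|_2^2$, split $2\Re\{\mathbf{b}_0^\textsc{H}\mathbf{b}\}$ via the polarization identity into $\tfrac12\|\boldsymbol{\Xi}_2\boldsymbol{\psi}\|_2^2-\tfrac12\|\boldsymbol{\Xi}_1\boldsymbol{\psi}\|_2^2$, and minorize the convex $\tfrac12\|\boldsymbol{\Xi}_2\boldsymbol{\psi}\|_2^2$ term once more by its tangent. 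The only difference is cosmetic: you identify the $\boldsymbol{\Xi}_i$ maps up front and swap the order in which the two Taylor bounds are presented, and you additionally verify tightness at $\boldsymbol{\psi}_0$, which the paper leaves implicit.
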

\begin{proof}
See Appendix~\ref{AppndA}.
\end{proof}

\begin{corollary} \label{col:2}
 {From Lemma~\ref{lemma:1}, it follows that} for an arbitrary constant vector $\mathbf{a}$, a concave lower bound on $ | \mathbf{a}^\textsc{H} \boldsymbol{\Phi} \mathbf{G}_\text{r} \mathbf{w}_k |^2$ around the point $ \boldsymbol{\psi}_0$ is given by
\begin{equation}
\begin{split}
    & | \mathbf{a}^\textsc{H} \boldsymbol{\Phi} \mathbf{G}_\text{r} \mathbf{w}_k |^2 \geq \bar{\Omega}_k (\mathbf{a}, \boldsymbol{\psi}, \boldsymbol{\psi}_0 )  \triangleq  - \frac{1}{2}  \| \bar{\boldsymbol{\Xi}}_{1,k} (\mathbf{a}, \boldsymbol{\psi}_0) \boldsymbol{\psi} \|_2^2 \\
    & + \Re \bigl\{ \boldsymbol{\psi}_0^\textsc{H} \bar{\boldsymbol{\Xi}}_{2,k}^\textsc{H} (\mathbf{a}, \boldsymbol{\psi}_0)  \bar{\boldsymbol{\Xi}}_{2,k} (\mathbf{a}, \boldsymbol{\psi}_0)\boldsymbol{\psi} \bigl\}  - \frac{1}{2} \| \bar{\boldsymbol{\Xi}}_{2,k} (\mathbf{a}, \boldsymbol{\psi}_0) \boldsymbol{\psi}_0 \|_2^2 \\
    & - | \bar{\boldsymbol{\xi}}_k^\textsc{H} (\mathbf{a}, \boldsymbol{\psi}_0)\boldsymbol{\psi}_0 |^2 ,
    \end{split}
    \label{eq:Lemm2}
\end{equation}
where 
\begin{subequations}
\begin{align}
& \bar{\boldsymbol{\Xi}}_{1,k} (\mathbf{a}, \boldsymbol{\psi}_0) \nonumber \\
& \triangleq \begin{bmatrix}
    (\text{diag}(\boldsymbol{\Phi_0})^\textsc{H} \mathbf{A}^\textsc{T} \mathbf{G}_\text{r}^* \mathbf{w}_{k,0}^* \otimes \mathbf{I}_M) \otimes \boldsymbol{\delta}_k^\textsc{T}  &  - \mathbf{G}_\text{r}^\textsc{H} \mathbf{A}
\end{bmatrix}, \\
& \bar{\boldsymbol{\Xi}}_{2,k} (\mathbf{a}, \boldsymbol{\psi}_0) \nonumber \\
& \triangleq \begin{bmatrix}
    (\text{diag}(\boldsymbol{\Phi_0})^\textsc{H} \mathbf{A}^\textsc{T} \mathbf{G}_\text{r}^* \mathbf{w}_{k,0}^* \otimes \mathbf{I}_M) \otimes \boldsymbol{\delta}_k^\textsc{T} &  \mathbf{G}_\text{r}^\textsc{H} \mathbf{A}
\end{bmatrix}, \\
&\bar{\boldsymbol{\xi}}_k^\text{H} (\mathbf{a}, \boldsymbol{\psi}_0) \triangleq \begin{bmatrix}
    \mathbf{0}_{1 \times MK} & \mathbf{w}_{0,k}^\textsc{H} \mathbf{G}_\text{r}^\textsc{H} \mathbf{A}
\end{bmatrix},
\end{align}
\end{subequations}
 and $\boldsymbol{\delta}_k$ is a vector with the $k$-th element equal to one and all other elements equal to zero.
\end{corollary}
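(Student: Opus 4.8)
The plan is to obtain the corollary as the single-column specialization of Lemma~\ref{lemma:1}, followed by a change of variable that re-expresses the bound in terms of the joint vector $\boldsymbol{\psi}$. The starting observation is that $\mathbf{a}^\mathsf{H}\boldsymbol{\Phi}\mathbf{G}_\text{r}\mathbf{w}_k$ is a scalar, so $|\mathbf{a}^\mathsf{H}\boldsymbol{\Phi}\mathbf{G}_\text{r}\mathbf{w}_k|^2 = \|\mathbf{a}^\mathsf{H}\boldsymbol{\Phi}\mathbf{G}_\text{r}\mathbf{w}_k\|_2^2$, which is precisely the quantity bounded in Lemma~\ref{lemma:1} with the matrix argument $\mathbf{W}$ replaced by its single $k$-th column $\mathbf{w}_k$ (the instance $K=1$). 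First I would apply Lemma~\ref{lemma:1} verbatim to this single-column argument: this yields a concave lower bound of the shape~\eqref{eq:Lemm1} in which $\mathbf{W},\mathbf{W}_0$ are everywhere replaced by $\mathbf{w}_k,\mathbf{w}_{k,0}$, and in which the bound is expressed in terms of the reduced stacked vector $[\mathbf{w}_k^\mathsf{T},\ \text{diag}(\boldsymbol{\Phi}^*)^\mathsf{T}]^\mathsf{T}$.

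Next I would lift this reduced bound onto the full variable $\boldsymbol{\psi}$, whose first block is $\text{vec}(\mathbf{W})$ rather than $\mathbf{w}_k$. The enabling identity is the column-selection relation $\mathbf{w}_k = \mathbf{W}\boldsymbol{\delta}_k = (\boldsymbol{\delta}_k^\mathsf{T}\otimes\mathbf{I}_M)\text{vec}(\mathbf{W})$, which follows from $\text{vec}(\mathbf{A}\mathbf{X}\mathbf{B})=(\mathbf{B}^\mathsf{T}\otimes\mathbf{A})\text{vec}(\mathbf{X})$. Substituting this into the $\mathbf{W}$-dependent (first) blocks of the matrices $\boldsymbol{\Xi}_1,\boldsymbol{\Xi}_2,\boldsymbol{\Xi}_3$ of Lemma~\ref{lemma:1} introduces the Kronecker factor built from $\boldsymbol{\delta}_k$ and produces $\bar{\boldsymbol{\Xi}}_{1,k},\bar{\boldsymbol{\Xi}}_{2,k},\bar{\boldsymbol{\xi}}_k$; the $\boldsymbol{\Phi}$-dependent block $\pm\mathbf{G}_\text{r}^\mathsf{H}\mathbf{A}$ is left untouched because it contains no $\mathbf{W}$. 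Since $\boldsymbol{\Xi}_3$ now acts on a single column, its contribution $\|\boldsymbol{\Xi}_3\boldsymbol{\psi}_0\|_2^2$ reduces to the scalar $|\bar{\boldsymbol{\xi}}_k^\mathsf{H}\boldsymbol{\psi}_0|^2$ seen in~\eqref{eq:Lemm2}. Collecting the four terms reproduces $\bar{\Omega}_k(\mathbf{a},\boldsymbol{\psi},\boldsymbol{\psi}_0)$, and concavity is inherited automatically, since the expression keeps the form ``negative semidefinite quadratic $-\tfrac12\|\bar{\boldsymbol{\Xi}}_{1,k}\boldsymbol{\psi}\|_2^2$ plus an affine term plus $\boldsymbol{\psi}$-independent constants''.

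The main obstacle is purely the Kronecker/commutation bookkeeping in this lifting step: one must track the exact placement and ordering of the $\boldsymbol{\delta}_k$ factor relative to $\mathbf{I}_M$ so that $\bar{\boldsymbol{\Xi}}_{1,k}\boldsymbol{\psi}$ reproduces exactly the single-column quantity coming out of Lemma~\ref{lemma:1}, consistent with the column-stacking convention defining $\boldsymbol{\psi}$ and with the resulting dimensions $M\times(MK+N_\text{r})$. A naive substitution can yield $\boldsymbol{\delta}_k^\mathsf{T}\otimes\mathbf{I}_M$ versus $\mathbf{I}_M\otimes\boldsymbol{\delta}_k^\mathsf{T}$, which differ by a commutation matrix, so I would verify this identity carefully (e.g.\ by checking it on $\text{vec}(\mathbf{W})$ directly) before declaring the blocks $\bar{\boldsymbol{\Xi}}_{1,k},\bar{\boldsymbol{\Xi}}_{2,k},\bar{\boldsymbol{\xi}}_k$ correct. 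Once that single identity is confirmed, the validity and direction of the inequality transfer directly from Lemma~\ref{lemma:1} and no further work is needed.
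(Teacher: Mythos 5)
Your proposal is correct and matches the paper's intended route exactly: the paper gives no separate proof of this corollary, treating it as the specialization of Lemma~\ref{lemma:1} to the single column $\mathbf{w}_k$ followed by the lift $\mathbf{w}_k = (\boldsymbol{\delta}_k^\textsc{T} \otimes \mathbf{I}_M)\,\text{vec}(\mathbf{W})$, which is precisely what you do. Your caution about the Kronecker ordering is well placed --- the selection identity requires $\boldsymbol{\delta}_k^\textsc{T} \otimes \mathbf{I}_M$ acting on $\text{vec}(\mathbf{W})$, so the check you propose is exactly the one needed to reconcile your derivation with the ordering as printed in the corollary.
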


\begin{lemma} \label{lemma:3}
For an arbitrary constant vector $\mathbf{a}$, a convex upper bound on $ | \mathbf{a}^\textsc{H} \boldsymbol{\Phi} \mathbf{G}_\text{r} \mathbf{w}_k|^2$ around $\boldsymbol{\psi}_0$ is given by
\begin{equation}
      | \mathbf{a}^\textsc{H} \boldsymbol{\Phi} \mathbf{G}_\text{r} \mathbf{w}_k|^2 \leq \rho_k^2 + \zeta_k^2, 
\end{equation}
where $\rho_k$ and $\zeta_k$ are slack variables satisfying the following constraints:
\begin{subequations}
\begin{align}
     \rho_k & \geq \Lambda_k (\pm \mathbf{a},\boldsymbol{\psi},\boldsymbol{\psi}_0), \label{eq:in11} \\
     \rho_k & \geq \Lambda_k ( \pm j\mathbf{a},\boldsymbol{\psi},\boldsymbol{\psi}_0), \label{eq:in21}
\end{align}
\end{subequations}
where $\Lambda_k (\mathbf{a},\boldsymbol{\psi},\boldsymbol{\psi}_0) \triangleq \frac{1}{4} \|  \boldsymbol{\Pi}_{k} (\mathbf{a}) \boldsymbol{\psi}\|_2^2 - \frac{1}{2} \Re \{ \boldsymbol{\psi}_0^\textsc{H} \boldsymbol{\Pi}_{k}^\textsc{H} (-\mathbf{a}) \boldsymbol{\Pi}_{k} (-\mathbf{a}) \boldsymbol{\psi} \} + \frac{1}{4} \|  \boldsymbol{\Pi}_{k} (- \mathbf{a}) \boldsymbol{\psi}_0\|_2^2$ and $\boldsymbol{\Pi}_{k} ( \mathbf{a}) \triangleq \begin{bmatrix}
    \mathbf{A}^*\mathbf{G}_\text{r} \otimes \boldsymbol{\delta}_k^\textsc{T} &  \mathbf{I}_{N_\text{r}}
\end{bmatrix}$.

\end{lemma}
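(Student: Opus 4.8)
The plan is to exploit the elementary identity $|z|^2 = (\Re\{z\})^2 + (\Im\{z\})^2$ for the scalar $z \triangleq \mathbf{a}^\textsc{H}\boldsymbol{\Phi}\mathbf{G}_\text{r}\mathbf{w}_k$, and to dominate each of the two real terms by a squared slack variable. I introduce $\rho_k$ and $\zeta_k$ with the intended meaning $\rho_k \geq |\Re\{z\}|$ and $\zeta_k \geq |\Im\{z\}|$, so that $|z|^2 \leq \rho_k^2 + \zeta_k^2$ is immediate. The real work is then to produce, for each sign choice, a \emph{convex} surrogate dominating $\pm\Re\{z\}$ and $\pm\Im\{z\}$; these surrogates are precisely the four instances $\Lambda_k(\pm\mathbf{a},\boldsymbol{\psi},\boldsymbol{\psi}_0)$ (tied to $\rho_k$) and $\Lambda_k(\pm j\mathbf{a},\boldsymbol{\psi},\boldsymbol{\psi}_0)$ (tied to $\zeta_k$).

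First I would recast $z$ as a bilinear form in the stacked variable $\boldsymbol{\psi}$. Writing $\boldsymbol{\phi} \triangleq \text{diag}(\boldsymbol{\Phi})$ so that the lower block of $\boldsymbol{\psi}$ is $\text{diag}(\boldsymbol{\Phi}^*) = \boldsymbol{\phi}^*$, a direct computation gives $z = \boldsymbol{\phi}^\textsc{T}\mathbf{A}^*\mathbf{G}_\text{r}\mathbf{w}_k = (\boldsymbol{\phi}^*)^\textsc{H}\mathbf{p}$ with $\mathbf{p}\triangleq\mathbf{A}^*\mathbf{G}_\text{r}\mathbf{w}_k$ and $\mathbf{A} = \text{diag}(\mathbf{a})$. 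Using the vectorization identity $\mathbf{A}^*\mathbf{G}_\text{r}\mathbf{W}\boldsymbol{\delta}_k = (\boldsymbol{\delta}_k^\textsc{T}\otimes\mathbf{A}^*\mathbf{G}_\text{r})\,\text{vec}(\mathbf{W})$, I would show that $\boldsymbol{\Pi}_k(\mathbf{a})\boldsymbol{\psi} = \mathbf{p} + \boldsymbol{\phi}^*$: the first block of $\boldsymbol{\Pi}_k(\mathbf{a})$ extracts $\mathbf{w}_k$ from $\text{vec}(\mathbf{W})$ and maps it to $\mathbf{p}$, while the trailing $\mathbf{I}_{N_\text{r}}$ block reproduces $\boldsymbol{\phi}^*$. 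Hence $\boldsymbol{\Pi}_k(-\mathbf{a})\boldsymbol{\psi} = -\mathbf{p} + \boldsymbol{\phi}^*$, and since $\|\mathbf{p}+\boldsymbol{\phi}^*\|_2^2 - \|{-\mathbf{p}}+\boldsymbol{\phi}^*\|_2^2 = 4\Re\{(\boldsymbol{\phi}^*)^\textsc{H}\mathbf{p}\} = 4\Re\{z\}$, the polarization identity $\Re\{z\} = \frac{1}{4}\big(\|\boldsymbol{\Pi}_k(\mathbf{a})\boldsymbol{\psi}\|_2^2 - \|\boldsymbol{\Pi}_k(-\mathbf{a})\boldsymbol{\psi}\|_2^2\big)$ follows.

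The right-hand side is a difference of two convex quadratics. I keep the first (convex) term as is and replace the concave term $-\frac{1}{4}\|\boldsymbol{\Pi}_k(-\mathbf{a})\boldsymbol{\psi}\|_2^2$ by its first-order Taylor expansion about $\boldsymbol{\psi}_0$, which is a global upper bound by concavity; concretely, applying $\|\mathbf{B}\boldsymbol{\psi}\|_2^2 \geq 2\Re\{\boldsymbol{\psi}_0^\textsc{H}\mathbf{B}^\textsc{H}\mathbf{B}\boldsymbol{\psi}\} - \|\mathbf{B}\boldsymbol{\psi}_0\|_2^2$ with $\mathbf{B} = \boldsymbol{\Pi}_k(-\mathbf{a})$ yields exactly $\Lambda_k(\mathbf{a},\boldsymbol{\psi},\boldsymbol{\psi}_0) \geq \Re\{z\}$, where $\Lambda_k$ is convex (a convex quadratic plus an affine term). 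Substituting $\mathbf{a}\to-\mathbf{a}$ gives $\Lambda_k(-\mathbf{a},\boldsymbol{\psi},\boldsymbol{\psi}_0) \geq -\Re\{z\}$, so enforcing $\rho_k \geq \Lambda_k(\pm\mathbf{a},\boldsymbol{\psi},\boldsymbol{\psi}_0)$ forces $\rho_k \geq |\Re\{z\}|$. For the imaginary part I use $\Im\{z\} = \Re\{(j\mathbf{a})^\textsc{H}\boldsymbol{\Phi}\mathbf{G}_\text{r}\mathbf{w}_k\}$, since $(j\mathbf{a})^\textsc{H} = -j\mathbf{a}^\textsc{H}$ and $\Re\{-jz\} = \Im\{z\}$; rerunning the identical construction with $\pm j\mathbf{a}$ and tying it to $\zeta_k$ gives $\zeta_k \geq |\Im\{z\}|$. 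Combining $\rho_k^2 \geq (\Re\{z\})^2$ and $\zeta_k^2 \geq (\Im\{z\})^2$ closes the argument.

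The main obstacle is the conjugation and index bookkeeping in the bilinear reformulation: one must match the Kronecker-product ordering inside $\boldsymbol{\Pi}_k(\mathbf{a})$ to the column-stacking convention of $\text{vec}(\mathbf{W})$ and to the conjugation carried by the lower block $\text{diag}(\boldsymbol{\Phi}^*)$ of $\boldsymbol{\psi}$, so that $\boldsymbol{\Pi}_k(\pm\mathbf{a})\boldsymbol{\psi}$ reproduces exactly $\pm\mathbf{p}+\boldsymbol{\phi}^*$ and the polarization identity emerges with the correct factor $\frac{1}{4}$. Once this identification is fixed, the concave-term linearization and the $j$-rotation handling the imaginary part are routine.
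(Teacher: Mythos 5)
Your proposal is correct and follows essentially the same route as the paper's Appendix~B: bound $(\Re\{z\})^2$ and $(\Im\{z\})^2$ by slack variables, express $\pm\Re\{z\}$ and $\pm\Im\{z\}$ via the polarization identity as a difference of squared norms $\frac{1}{4}\|\boldsymbol{\Pi}_k(\pm\mathbf{a})\boldsymbol{\psi}\|_2^2 - \frac{1}{4}\|\boldsymbol{\Pi}_k(\mp\mathbf{a})\boldsymbol{\psi}\|_2^2$, keep the convex part, and linearize the concave part around $\boldsymbol{\psi}_0$ to obtain $\Lambda_k$. Your explicit care with the $\mathrm{vec}$/Kronecker ordering (using $\boldsymbol{\delta}_k^\textsc{T}\otimes\mathbf{A}^*\mathbf{G}_\text{r}$ to extract $\mathbf{w}_k$) is in fact the correct convention for column-stacked $\mathrm{vec}(\mathbf{W})$, matching the intent of the paper's $\boldsymbol{\Pi}_k$ even where its displayed factor order is loose.
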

\begin{proof}
See Appendix~\ref{AppndB}.
\end{proof}

Now, we start describing the SCA method by addressing the objective function in~\eqref{obj_fun_sp1}. We first observe that the term $\varsigma_\ell^2 |d_\ell|^2 || \mathbf{c}_{\text{r},\ell}^\text{H} \boldsymbol{\Phi} \mathbf{G}_\text{r} \mathbf{W}||_2^2$ in~\eqref{obj_fun_sp1} has a negligible impact compared to $|| \mathbf{e}_{\text{r}}^\text{H} \boldsymbol{\Phi}\mathbf{G}_\text{r} \mathbf{W}||_2^2$ due to its significant path loss. Consequently, maximizing the direct interference power from the RIS to the adversarial detector is sufficient to minimize the sensing SINR across all locations (i.e., maximizing the term $|| \mathbf{e}_{\text{r}}^\text{H} \boldsymbol{\Phi}\mathbf{G}_\text{r} \mathbf{W}||_2^2$).

We apply Lemma~\ref{lemma:1} to derive a surrogate function for the term $|| \mathbf{e}_{\text{r}}^\text{H} \boldsymbol{\Phi}\mathbf{G}_\text{r} \mathbf{W}||_2^2$ as follows:
\begin{equation}
    || \mathbf{e}_{\text{r}}^\text{H} \boldsymbol{\Phi}\mathbf{G}_\text{r} \mathbf{W}||_2^2  \geq \Omega (\mathbf{e}_\text{r}, \boldsymbol{\psi}, \boldsymbol{\psi}^{(\tau-1)} ),
    \label{eq:trans_obj1}
\end{equation}
where $    \boldsymbol{\psi}^{(\tau)} = \begin{bmatrix}
        \text{vec} (\mathbf{W}^{(\tau)}) \\
        \text{diag} ({\boldsymbol{\Phi}^{(\tau)}}^*)
    \end{bmatrix}$ is the value of $\boldsymbol{\psi}$ at the $\tau$-th iteration of the SCA method.

Next, we focus on the constraint set~\eqref{sens_cons}, which we first rewrite for the $\ell$-th location as follows:
\begin{equation}
    ||\mathbf{c}_{\text{r},\ell}^\text{H} \boldsymbol{\Phi}\mathbf{G}_\text{r}\mathbf{W}||_2^2 \geq \bar{\Gamma}_{\text{s},\ell},
    \label{eq:sensing_SINR_inq}
\end{equation}
where $\bar{\Gamma}_{\text{s},\ell} \triangleq \Gamma_{\text{s},\ell} (\varrho^2 |\mathbf{u}^\text{H}\mathbf{e}_\text{a}|^2 + \sigma_\text{s}^2) / \varsigma_\ell^2 |\mathbf{u}^\text{H} \mathbf{c}_{\text{a},\ell} |^2 - \varrho^2 |d_{\ell}|^2$. We then apply Lemma~\ref{lemma:1} to derive a concave surrogate function for the left-hand side of~\eqref{eq:sensing_SINR_inq}. This transforms the constraint in~\eqref{sens_cons} to the following form:
\begin{equation}
    \Omega (\mathbf{c}_{\text{r},\ell}, \boldsymbol{\psi}, \boldsymbol{\psi}^{(\tau-1)} )\geq \bar{\Gamma}_\ell, \ \ \forall \ell \in \mathcal{L},
    \label{sen_cons_modJ}
\end{equation}
which is a convex constraint set in $\boldsymbol{\psi}$.

Next, we address the constraint set~\eqref{comm_cons}, which we rewrite for the $k$-th user as 
\begin{equation}
\begin{split}
         & | \mathbf{h}_{\text{r},k}^\text{T} \boldsymbol{\Phi} \mathbf{G}_\text{r} \mathbf{w}_k|^2 \\
         & \geq \Gamma_{\text{c},k} \bigg( \sum_{\underset{i\neq k}{i=1}}^K | \mathbf{h}_{\text{r},k}^\text{T} \boldsymbol{\Phi} \mathbf{G}_\text{r} \mathbf{w}_i|^2 + \varrho^2 | \mathbf{h}_{\text{r},k}^\text{T} \boldsymbol{\Phi} \mathbf{e}_\text{r}|^2  +   \bar{\sigma}_{\text{c},k}^2 \bigg).
 \end{split}
 \label{eq:comm_SINR_W}
\end{equation}
The left-hand side of~\eqref{eq:comm_SINR_W} can be addressed using Corollary~\ref{col:2} as follows:
\begin{equation}
    | \mathbf{h}_{\text{r},k}^\text{T} \boldsymbol{\Phi} \mathbf{G}_\text{r} \mathbf{w}_k|^2 \geq  \bar{\Omega} (\mathbf{h}_{\text{r},k}^*,\boldsymbol{\psi}, \boldsymbol{\psi}^{(\tau-1)}).
\end{equation}
Furthermore, we apply Lemma~\ref{lemma:3} to derive a surrogate functions for the $i$-th term in the summation on the right-hand side of~\eqref{eq:comm_SINR_W} by introducing the optimization variables ${ \delta_{k,i} }$ and ${ \chi_{k,i} }$ that satisfies the following conditions:
\begin{subequations}
\begin{align}
     \delta_{k,i} & \geq \Lambda_i (\pm \mathbf{h}_{\text{r},k}^*, \boldsymbol{\psi}, \boldsymbol{\psi}^{(\tau-1)}), \label{eq:in1} \\
     \chi_{k,i} & \geq \Lambda_i (\pm j  \mathbf{h}_{\text{r},k}^*, \boldsymbol{\psi}, \boldsymbol{\psi}^{(\tau-1)}), \label{eq:in2} 
\end{align}
\end{subequations}
Thus, we can approximate~\eqref{comm_cons} by a convex constraint as
\begin{equation}
\begin{split}
   & \bar{\Omega} (\mathbf{h}_{\text{r},k}^*,\boldsymbol{\psi}, \boldsymbol{\psi}^{(\tau-1)})  \geq \Gamma_{\text{c},k} \big( \sum_{\underset{i\neq k}{i=1}}^K \delta_{k,i}^2 + \xi_{k,i}^2 \\
    & + \varrho^2 \big| \begin{bmatrix}
        \mathbf{0}_{1 \times MK} & \mathbf{h}_{\text{r},k}^\text{T} \mathbf{E}_\text{r}^*
    \end{bmatrix} \boldsymbol{\psi} \big|^2 +   \bar{\sigma}_{\text{c},k}^2 \big), \ \ \forall k \in \mathcal{K},
    \end{split}
    \label{eq:comm_SINR_m_sp1}
\end{equation}
where $\mathbf{E}_\text{r} \triangleq \text{diag}(\mathbf{e}_\text{r})$.

Finally, the UM constraint at the RIS~\eqref{RIS_cons} can be relaxed as follows:
\begin{equation}
    |[\boldsymbol{\Phi}]_{n,n} | \leq 1, \ \forall n \in \mathcal{R}.
    \label{eq:relaxed_RIS1}
\end{equation}
To ensure that the relaxed constraint~\eqref{eq:relaxed_RIS1} is binding at convergence, we add a penalty term with a small regularization parameter to the objective function. This penalty term is expressed as:
\begin{equation}
    \mathcal{H} (\boldsymbol{\Phi}) = || \boldsymbol{\Phi}  ||_\text{F}^2.
    \label{eq:theta_Phi}
\end{equation}
We then derive a linear surrogate function for~\eqref{eq:theta_Phi} using the Taylor series expansion as follows\footnote{From our simulations, we observed that the relaxed constraint in~\eqref{eq:relaxed_RIS1} leads to the original constraint in~\eqref{RIS_cons}, even without including the penalty term in the objective function. However, we include it here for completeness.}:
\begin{equation}
\begin{split}
    & \mathcal{H} (\boldsymbol{\Phi}) \geq   2 \overbrace{\Re \big\{ \text{Tr} \big( (\boldsymbol{\Phi}^{(\tau-1)})^* \boldsymbol{\Phi} \big) \big\}}^{\bar{\mathcal{H}} (\boldsymbol{\Phi},\boldsymbol{\Phi}^{(\tau-1)})} - || \boldsymbol{\Phi}^{(\tau-1)}  ||_\text{F}^2.
    \end{split}
\end{equation}
We then re-express $\bar{\mathcal{H}} (\boldsymbol{\Phi},\boldsymbol{\Phi}^{(\tau-1)})$ in terms of $\boldsymbol{\psi}$ and $\boldsymbol{\psi}^{(\tau-1)}$ as
\begin{equation}
\begin{split}
        &\bar{\mathcal{H}} (\boldsymbol{\Phi},\boldsymbol{\Phi}^{(\tau-1)})  = \bar{\mathcal{H}} (\boldsymbol{\psi},\boldsymbol{\psi}^{(\tau-1)}) \\
        & = \Re \{ \big( [ \boldsymbol{\psi}]_{MK+1:MK+N_\text{r}}^{(\tau-1)} \big)^\text{H} [\boldsymbol{\psi}]_{MK+1:MK+N_\text{r}} \big\} \}.
        \end{split}
\end{equation}

Therefore, the optimization problem of the transmit beamformer and RIS phase shift matrix at the $\tau$-th iteration of the SCA algorithm is formulated as follows:
\begin{subequations}
\begin{align}
             & \max_{ \boldsymbol{\psi}, \{\delta_{k,i}  \} , \{\chi_{k,i}  \} }   \      \Omega (\mathbf{e}_\text{r}, \boldsymbol{\psi}, \boldsymbol{\psi}^{(\tau-1)} ) + \varepsilon  \bar{\mathcal{H}} (\boldsymbol{\psi},\boldsymbol{\psi}^{(\tau-1)}) \\
           \text{s.t.} \  & || [\boldsymbol{\psi}]_{1:MK} ||_2^2 \leq P_\text{max},  \\
          &  \eqref{sen_cons_modJ},  \eqref{eq:comm_SINR_m_sp1},  \\
          & \eqref{eq:in1}, \eqref{eq:in2}, \ \forall i \in \mathcal{K} \ \backslash \{ k \},  \ \forall k \in \mathcal{K},  \\
           & |[\boldsymbol{\psi}]_{n} | \leq 1, \ \forall n \in \{MK +1,\dots, MK +N_\text{r}\}, 
\end{align} 
\label{eq:opt_JO}
\end{subequations}%
\hspace{-0.2cm} where $\varepsilon$ is a small regularization parameter. This optimization problem is a second-order cone program (SOCP) and can be solved efficiently using the CVX toolbox. We summarize the joint optimization of the transmit beamformer and RIS phase shift matrix in~\textbf{Algorithm~\ref{algo1}}.

\begin{algorithm}[t]
\caption{Transmit beamformer and RIS phase shift matrix optimization} \label{algo1}

\KwIn{  $\mathbf{u}$, $\mathcal{R}$, $\boldsymbol{\psi}^{(0)} = \begin{bmatrix}
        \text{vec} (\mathbf{W}^{(0)}) \\
        \text{diag} \big(\boldsymbol{\Phi}^{(0)} \big)^*
    \end{bmatrix}$, $\varepsilon \geq 0$ , $\epsilon \geq 0$, $\tau = 0$ }

\Repeat{ $ \| \boldsymbol{\psi}^{(\tau)} - \boldsymbol{\psi}^{(\tau-1)}  \|_2^2 \leq \epsilon $}{
    
$ \tau \leftarrow \tau + 1$\;

Compute $\boldsymbol{\psi}^{(\tau)}$ by solving~\eqref{eq:opt_JO}\; 

}

\KwOut{ $\mathbf{W} = \text{vec}^{-1} \big([\boldsymbol{\psi}^{(\tau)}]_{1:MK}\big) $, $\boldsymbol{\Phi} = \text{diag} \big([\boldsymbol{\psi}^{(\tau)}]_{MK+1:MK+N_\text{r}}\big)^*$}
\end{algorithm}

\subsection{Receive Beamforming and RIS Element Assignment} \label{sec:u_R_A}
Now, assuming the values of $\mathbf{W}$ and $\boldsymbol{\Phi}$ are fixed, we aim to optimize for $\mathbf{u}$ and $\mathcal{R}$. The corresponding sub-optimization problem for these variables is:
\begin{subequations}
\begin{align}
             \min_{ \mathbf{u}, \mathcal{R}}   \ \    &  \Tilde{\alpha} \label{obj_fun_sp2} \\
          \text{s.t.} \ \ 
          &  \eqref{sens_cons}, \eqref{rec_BF_cons},
\end{align} 
\label{eq:SP2}
\end{subequations}%
\hspace{-0.15cm}where $\Tilde{\alpha}$ is a constant. Notice that any feasible point is an optimal solution to~\eqref{eq:SP2}, as the objective function is constant. However, it can be observed from~\eqref{eq:main_opt} that increasing the number of absorptive elements may only improve the sensing SINR at the RIS. On the other hand, increasing the number of reflecting elements can enhance the sensing SINRs at the RIS, the communication SINRs, and the objective function. Therefore, it is desirable to maximize the number of reflecting elements without violating the constraint set~\eqref{sens_cons}.

 {
Based on this argument, we need to obtain $\mathbf{u}$ and $\mathcal{R}$ that satisfy the constraint set~\eqref{sens_cons} while maximizing the number of reflecting elements (i.e., maximizing $|\mathcal{R}|$), which is challenging to tackle directly. However, we observe that maximizing $|\mathcal{R}|$ is equivalent to minimizing $|\mathcal{A}|$. To approximate this optimization problem, we minimize the number of nonzero elements in the receive combining vector $\mathbf{u}$ while ensuring compliance with the constraints~\eqref{sens_cons} and~\eqref{rec_BF_cons}. This leads to the following optimization problem:
\begin{subequations}
\begin{align}
             \min_{ \mathbf{u} }   \ \    &  \| \mathbf{u} \|_0 \label{obj_fun_sp3} \\
          \text{s.t.} \ \ 
          &  \eqref{sens_cons}, \eqref{rec_BF_cons}.
\end{align} 
\label{eq:SP3}
\end{subequations}%
\hspace{-0.15cm}}This optimization problem seeks the smallest-dimension receive beamformer that satisfies~\eqref{sens_cons} and~\eqref{rec_BF_cons}, optimizing the nonzero elements of $\mathbf{u}$ to maximize the left-hand side of~\eqref{sens_cons} while adhering to~\eqref{rec_BF_cons}. However, the optimization problem~\eqref{eq:SP3} is non-convex. To address this, we first approximate the objective function~\eqref{obj_fun_sp3} by replacing the $l_0$ norm with the $l_1$ norm, which also promotes sparsity~\cite{2013_Ramirez}.

Next, we re-express the constraint set~\eqref{sens_cons} as
\begin{equation}
     |\mathbf{u}^\text{H} \mathbf{c}_{\text{a},\ell} |^2 \geq \frac{\Gamma_{\text{s},\ell} (\varrho^2 |\mathbf{u}^\text{H}\mathbf{e}_\text{a}|^2 + \sigma_\text{s}^2)}{\varsigma_\ell^2 ||\mathbf{c}_{\text{r},\ell}^\text{H} \boldsymbol{\Phi}\mathbf{G}_\text{r}\mathbf{W}||_2^2 +\varsigma_\ell^2 \varrho^2 |d_{\ell}|^2}.
     \label{eq:tr_se_co}
\end{equation}
While the right-hand side of~\eqref{eq:tr_se_co} is already convex in $\mathbf{u}$, we can derive a surrogate function to lower-bound the left-hand side of~\eqref{eq:tr_se_co} using its first-order Taylor series expansion. Thus,~\eqref{eq:tr_se_co} can be approximated as
\begin{equation}
    \begin{split}
        & 2 \Re \{ (\mathbf{u}^{(\tau-1)})^\text{H} \mathbf{c}_{\text{a},\ell} \mathbf{c}_{\text{a},\ell}^\text{H}  \mathbf{u} \} - | \mathbf{c}_{\text{a},\ell}^\text{H}  \mathbf{u}^{(\tau-1)}|^2 \\
        & \geq \frac{\Gamma_{\text{s},\ell} (\varrho^2 |\mathbf{u}^\text{H}\mathbf{e}_\text{a}|^2 + \sigma_\text{s}^2)}{\varsigma_\ell^2 ||\mathbf{c}_{\text{r},\ell}^\text{H} \boldsymbol{\Phi}\mathbf{G}_\text{r}\mathbf{W}||_2^2 +\varsigma_\ell^2 \varrho^2 |d_{\ell}|^2},
    \end{split}
    \label{eq:nc_1}
\end{equation}
which is convex in $\mathbf{u}$.

Finally, we tackle the constraint~\eqref{rec_BF_cons} by rewriting it as two separate constraints, as follows:
\begin{subequations}
\begin{align}
     \| \mathbf{u} \|_2^2 \leq 1, \label{eq:EN_1} \\
     \| \mathbf{u} \|_2^2 \geq 1. \label{eq:EN_2}
\end{align}
\end{subequations}
While~\eqref{eq:EN_1} is already convex, we can approximate~\eqref{eq:EN_2} by a linear one as follows:
\begin{equation}
    2 \Re \{ (\mathbf{u}^{(\tau-1)})^\text{H} \mathbf{u} \} - \| \mathbf{u}^{(\tau-1)} \|_2^2 \geq 1.
    \label{eq:EN_3}
\end{equation}

Therefore, in the $\tau$-th iteration of the SCA method, the optimization problem~\eqref{eq:SP3} can be approximated as follows:
\begin{subequations}
\begin{align}
             \min_{ \mathbf{u} }   \ \    &  \| \mathbf{u} \|_1  \\
          \text{s.t.} \ \ 
          &  \eqref{eq:nc_1}, \ \ \forall \ell \in \mathcal{L}, \\
          & \eqref{eq:EN_1}, \eqref{eq:EN_3},
\end{align} 
\label{eq:SP4}
\end{subequations}%
\hspace{-0.15cm}which is a convex optimization problem in $\mathbf{u}$ and can be solved using standard convex optimization techniques such as the interior-point method, which can be implemented using the CVX toolbox. 

 {
At convergence, the optimization problem~\eqref{eq:SP4} results in some entries of $\mathbf{u}$ being very small or equal to zero. We construct the new set $\mathcal{D}$ by including the elements of $\mathcal{A}$ where the magnitudes of the corresponding elements of $\mathbf{u}$ are smaller than a given threshold. The sets of reflecting and absorptive elements are then updated as $\mathcal{R} \leftarrow \mathcal{R} \cup \mathcal{D}$ and $\mathcal{A} \leftarrow \mathcal{A} \backslash \mathcal{D}$, respectively. After that, we also update the channel $\mathbf{G}_\text{r}$ by including the relevant rows from $\mathbf{G}$, as specified in Table~\ref{tab:channels}. Similarly, the channels $\mathbf{h}_{\text{r},k}$, $\mathbf{c}_{\text{r},\ell}$, and $\mathbf{e}_{\text{r}}$ are updated by including the relevant elements of $\mathbf{h}_{k}$, $\mathbf{c}_{\ell}$, and $\mathbf{e}$, respectively. Conversely, the channels $\mathbf{c}_{\text{a},\ell}$ and $\mathbf{e}_{\text{a}}$ are updated by omitting the relevant elements from $\mathbf{c}_{\ell}$ and $\mathbf{e}$, respectively. Furthermore, we update $\mathbf{u}$ and $\boldsymbol{\Phi}$ by omitting and adding the relevant elements, respectively\footnote{To ensure that the newly added RIS elements do not contribute negatively to the objective function by introducing detrimental interference, their magnitudes can be initially set to zero. This issue will be resolved when solving for $\mathbf{W}$ and $\boldsymbol{\Phi}$ in the next iteration.}. The proposed method for optimizing the receive beamforming and RIS element assignment is summarized in \textbf{Algorithm~\ref{algo2}}.}

\begin{algorithm}[t]
\caption{Receive beamforming and ris RIS element assignment optimization} \label{algo2}

\KwIn{ $\boldsymbol{\Phi}$, $\mathbf{W}$, $\mathbf{u}^{(0)}$, $\mathcal{R}^{(0)}$, $\tau = 0$, $\epsilon> 0$, $\varepsilon> 0$ }

\Repeat{ $ || \mathbf{u}^{(\tau-1)} - \mathbf{u}^{(\tau)} ||_2^2 \leq \epsilon$}{
    
$ \tau \leftarrow \tau + 1$\;

Compute $\mathbf{u}^{(\tau)}$ by solving~\eqref{eq:SP4}\; 

}

$\mathcal{D} = \emptyset$\;

\For{ $i \in  1,\dots, N_\mathsf{a}$}{
\If{ $|[ \mathbf{u}^{(\tau)} ]_i | \leq \varepsilon$}{

$\mathcal{D} \leftarrow \mathcal{D} \cup \mathcal{A}(i) $\;

}

}

$\mathcal{R}^{(\tau)} \leftarrow \mathcal{R}^{(\tau-1)} \cup \mathcal{D}$ \;

$\mathcal{A} \leftarrow \mathcal{A} \backslash \mathcal{D}$ \;

Remove the elements with indices in $\mathcal{D}$ from $\mathbf{u}^{(\tau)}$\;

Update the channels $\mathbf{G}_\text{r}$, $\mathbf{h}_{\text{r},k}$, $\mathbf{c}_{\text{r},\ell}$, $\mathbf{c}_{\text{a},\ell}$, $\mathbf{e}_{\text{r}}$ and $\mathbf{e}_{\text{a}}$ as specified in Table~\ref{tab:channels}\;

 Expand $\mathbf{\Phi}$ to include the elements in the set $\mathcal{D}$ with zero magnitudes\;

\KwOut{ $\mathbf{u} = \mathbf{u}^{(\tau)}$, $\mathcal{R} = \mathcal{R}^{(\tau)}$, $\mathcal{A} $, $\mathbf{G}_\text{r}$, $\mathbf{h}_{\text{r},k}$, $\mathbf{c}_{\text{r},\ell}$, $\mathbf{c}_{\text{a},\ell}$, $\mathbf{e}_{\text{r}}$, $\mathbf{e}_{\text{a}}$, $\mathbf{\Phi}$ }
\end{algorithm}

\subsection{Initialization} \label{sec:init}
The SCA method requires an initial feasible point, which can be non-trivial to find. In this section, we present a two-step approach to determine an initial feasible point for problem~\eqref{eq:main_opt}.

In the first step, we choose random $\mathbf{W}$, $\mathbf{\Phi}$, and $\mathbf{u}$ that satisfy the constraints~\eqref{power_cons},~\eqref{RIS_cons}, and~\eqref{rec_BF_cons}, respectively, which is a straightforward task. Similarly, we start with an initial $\mathcal{R}$. Next, we focus on refining this initialization to satisfy the constraints~\eqref{sens_cons} and~\eqref{comm_cons}\footnote{A judicious choice of initial point would make the second step more efficient. For instance, while $\boldsymbol{\Phi}$ can be chosen as the identity matrix, $\mathbf{W}$ can be chosen based on the zero-forcing (ZF) criterion to suppress inter-user interference. Moreover, $\mathbf{u}$ can be chosen to mitigate direct interference from the adversarial detector to the absorptive elements at the RIS. More importantly, the dimensions of the set $\mathcal{R}$ should be reasonable.}.

In the second step, we introduce the constraint set
\begin{equation}
\begin{split}
   & \bar{\lambda}_{\text{s},\ell} + \varsigma_\ell^2 |\mathbf{u}^\text{H} \mathbf{c}_{\text{a},\ell} |^2 ||\mathbf{c}_{\text{r},\ell}^\text{H} \boldsymbol{\Phi}\mathbf{G}_\text{r}\mathbf{W}||_2^2 +\varsigma_\ell^2 \varrho^2 |d_{\ell}|^2 |\mathbf{u}^\text{H}\mathbf{c}_{\text{a},\ell} |^2 \\
   & \geq \Gamma_{\text{s},\ell} (\varrho^2 |\mathbf{u}^\text{H}\mathbf{e}_\text{a}|^2 + \sigma_\text{s}^2),
    \end{split}
    \label{eq:in_con1}
\end{equation}
where $ \bar{\lambda}_{\text{s},\ell} \geq 0$ for all $\ell \in \mathcal{L}$. It can be noted that the constraint set~\eqref{eq:in_con1} is equivalent to~\eqref{sens_cons} when $ \bar{\lambda}_{\text{s},\ell} \rightarrow  0$ for all $\ell \in \mathcal{L}$. 

Similarly, the constraint set
\begin{equation}
\begin{split}
        & \bar{\lambda}_{\text{c},k} +  | \mathbf{h}_{\text{r},k}^\text{T} \boldsymbol{\Phi} \mathbf{G}_\text{r} \mathbf{w}_k|^2\\
        & \geq  \Gamma_{\text{c},k} \bigg( \sum_{\underset{i\neq k}{i=1}}^K | \mathbf{h}_{\text{r},k}^\text{T} \boldsymbol{\Phi} \mathbf{G}_\text{r} \mathbf{w}_i|^2 + \varrho^2 | \mathbf{h}_{\text{r},k}^\text{T} \boldsymbol{\Phi} \mathbf{e}_\text{r}|^2  +   \bar{\sigma}_{\text{c},k}^2 \bigg),
 \end{split}
  \label{eq:in_con2}
\end{equation}
with $ \bar{\lambda}_{\text{c},k} \geq 0$ for all $k \in \mathcal{K}$ is equivalent to~\eqref{comm_cons} when $ \bar{\lambda}_{\text{c},k} \rightarrow  0$ for all $k \in \mathcal{K}$. 

Thus, to refine the initial point from the first step, we solve the following optimization problem:
\begin{subequations}
\begin{align}
             & \min_{ \{ \bar{\lambda}_{\text{s},\ell} \}, \{ \bar{\lambda}_{\text{c},k},  \}, \mathbf{W}, \boldsymbol{\Phi}, \mathbf{u}, \mathcal{R}}    \ \      \sum_{\ell = 1}^L \bar{\lambda}_{\text{s},\ell} + \sum_{k = 1}^K \bar{\lambda}_{\text{c},k} \\
          \text{s.t.} \ \ &  \eqref{eq:in_con1},  \bar{\lambda}_{\text{s},\ell} \geq 0 , \ \forall \ell \in \mathcal{L},  \\
          &  \eqref{eq:in_con2}, \bar{\lambda}_{\text{c},k} \geq 0 \ \ \forall k \in \mathcal{K},  \\
           & \eqref{power_cons}, \eqref{RIS_cons}, \eqref{rec_BF_cons}.
\end{align} 
\label{eq:init}
\end{subequations}%

The optimization problem~\eqref{eq:init} can be solved using the SCA method by alternating between jointly optimizing $\{ \bar{\lambda}_{\text{s},\ell} \}, \{ \bar{\lambda}_{\text{c},k},  \}, \mathbf{W}$ and $\boldsymbol{\Phi}$ and solving for $\mathbf{u}$ and $\mathcal{R}$ using methods similar to those presented in Sections~\ref{sec:W_PHI} and~\ref{sec:u_R_A}. Obtaining values close to zero for $\{ \bar{\lambda}_{\text{s},\ell} \}$ and $ \{ \bar{\lambda}_{\text{c},k} \}$ produces a feasible point for~\eqref{eq:main_opt}. \textbf{Algorithm~\ref{algo_main}} summarizes the overall proposed approach to optimize~\eqref{eq:main_opt}.

\begin{algorithm}[t]
\caption{Joint beamforming design for communication, detection and protection against adversarial detectors in RIS-aided ISAC} \label{algo_main}

\KwIn{  $\tau = 0$ , $\epsilon \!>\! 0$ }

Find an initial feasible point $\mathbf{W}^{(0)}$, $\boldsymbol{\Phi}^{(0)}$, $\mathbf{u}^{(0)}$ and $\mathcal{R}^{(0)}$ as described in Section~\ref{sec:init}\;

\Repeat{ $|| \mathbf{e}_{\mathsf{r}}^\mathsf{H} \boldsymbol{\Phi}^{(\tau)}\mathbf{G}_\mathsf{r} \mathbf{W}^{(\tau)}||_2^2 -  || \mathbf{e}_{\mathsf{r}}^\mathsf{H} \boldsymbol{\Phi}^{(\tau-1)}\mathbf{G}_\mathsf{r} \mathbf{W}^{(\tau-1)}\||_2^2 \leq \epsilon $}{
    
$ \tau \leftarrow \tau + 1$\;

Compute $\mathbf{W}^{(\tau)}$ and $\boldsymbol{\Phi}^{(\tau)}$ using~\textbf{Algorithm~\ref{algo1}}\;

Compute $\mathbf{u}^{(\tau)}$ and $\mathcal{R}^{(\tau)}$, and update $\mathcal{A}$, $\mathbf{G}_\text{r}$, $\mathbf{h}_{\text{r},k}$, $\mathbf{c}_{\text{r},\ell}$, $\mathbf{c}_{\text{a},\ell}$, $\mathbf{e}_{\text{r}}$, $\mathbf{e}_{\text{a}}$ and $\mathbf{\Phi}$  using~\textbf{Algorithm~\ref{algo2}}\;

}

\KwOut{ $\mathbf{W} = \mathbf{W}^{(\tau)}$, $\boldsymbol{\Phi} = \boldsymbol{\Phi}^{(\tau)}$, $\mathbf{u} = \mathbf{u}^{(\tau)}$, $\mathcal{R} = \mathcal{R}^{(\tau)}$}
\end{algorithm}

\subsection{Convergence}

\textit{1) Algorithm~\ref{algo1}}: The SCA method is guaranteed to converge to at least a local solution in the optimization variables, as demonstrated in~\cite{2014_Razaviyayn}. However, jointly optimizing $\mathbf{W}$ and $\boldsymbol{\Phi}$ (i.e., as presented in Algorithm~\ref{algo1}) can help to avoid local solutions that may arise when optimizing them sequentially~\cite{2022_Bezdek}.

\textit{2) Algorithm~\ref{algo2}}: The sub-optimization problem~\eqref{eq:SP2} shows that any optimal point produced by Algorithm~\ref{algo1} is also an optimal point of Algorithm~\ref{algo2} since the objective function is constant and all the constraints are satisfied. However, solving~\eqref{eq:SP3} allows allocating more RIS elements for reflection tasks without compromising the sensing SINR constraint. This extra degree of freedom, in turn, helps achieve better solutions for $\mathbf{W}$ and $\boldsymbol{\Phi}$ in the next iteration.

\subsection{Computational Complexity}
The computational complexities of Algorithm~\ref{algo1} and Algorithm~\ref{algo2} are dominated by solving the optimization problems~\eqref{eq:opt_JO} and~\eqref{eq:SP4}, respectively. The solutions to these optimization problems depend heavily on the iterative method used to solve them. To quantify the computational complexity associated with Algorithm~\ref{algo1} and Algorithm~\ref{algo2}, we assume that an interior-point method is employed to obtain the optimal solutions for both.

\textit{1) Algorithm~\ref{algo1}}: The complexity for solving SOCPs using the interior-point method can be upper bounded by $\mathcal{O}\big(\mathcal{M}^{0.5} (\mathcal{M} + \mathcal{B}) \mathcal{B}^2 \big)$, where $\mathcal{M}$ and $\mathcal{B}$ represent the number of constraints and variables, respectively~\cite{2004_Boyd}.  Based on this, we approximate the computational complexity of Algorithm~\ref{algo1} as $\mathcal{O} \big( I_1 (K^2 + L + N_\text{r})^{0.5} (K^2 + MK + L + N_\text{r}) (K^2 + MK + N_\text{r})^2 \big)$, where $I_1$ represents the maximum number of iterations needed for Algorithm~\ref{algo1} to converge. This computational complexity expression can be further upper-bounded by $\mathcal{O} \big( I_1 (K^2 + \max( L,MK) + N_\text{r})^{3.5}  \big)$.

\textit{2) Algorithm~\ref{algo2}}: It was also shown in~\cite{2004_Boyd} that the complexity for solving an optimization problem with an objective function comprising an $l_p$ norm with $\mathcal{B}$ constraints and $\mathcal{B}$ variables can be upper bounded by $\mathcal{O}\big(\mathcal{M}^{0.5} (\mathcal{M} + \mathcal{B}) \mathcal{B}^2 \big)$. Thus, we can approximate the computational complexity of Algorithm~\ref{algo2} as $\mathcal{O}\big(I_2 L^{0.5} (L + N_\text{a}) N_\text{a}^2 \big)$, where $I_2$ is the maximum number of iterations needed for Algorithm~\ref{algo2} to converge.

\textit{3) Algorithm~\ref{algo_main}}: Combining Algorithms~\ref{algo1} and Algorithms~\ref{algo2}, the per-iteration complexity of Algorithm~\ref{algo_main} is $\mathcal{O}\big( I_1 (K^2 + \max( L,MK) + N_\text{r})^{3.5} + I_2 L^{0.5} (L + N_\text{a}) N_\text{a}^2 \big)$. However, it can be noted that the complexity of Algorithm~\ref{algo2} is negligible compared to that of Algorithm~\ref{algo1}. Thus, we can approximate the per-iteration complexity of Algorithm~\ref{algo_main} as $\mathcal{O} \big( I_1 (K^2 + \max( L,MK) + N_\text{r})^{3.5}  \big)$.

\section{Numerical Results}   \label{sec:siml}
In this section, we present extensive simulation results to evaluate the performance of the proposed system. The following parameters are used unless stated otherwise. We assume that the BS is equipped with $M=4$ antennas, while the RIS comprises $N=64$ elements in total. The number of users is $K=4$, and the number of samples in the sensing area is set to $L=9$. The BS and the adversarial detector are assumed to have power budgets of $P_\text{max}=\unit[40]{dBm}$ and $\varrho^2=\unit[30]{dBm}$, respectively. The RCS areas are $\varsigma_\ell^2 = \unit[0.8]{m^2}$ for all $\ell \in \mathcal{L}$. The communication SINR thresholds for all users and the sensing SINR thresholds for all locations are set to $\Gamma_{\text{c},k} = \Gamma_{\text{s},\ell} = \unit[7.5]{dB}$ for all $k \in \mathcal{K}$ and for all $\ell \in \mathcal{L}$. Moreover, $\sigma_\text{s}^2 = \sigma_\text{d}^2 = \bar{\sigma}_{\text{c},k}^2 = \unit[-70]{dBm}$ for all $k \in \mathcal{K}$.

The BS and the RIS are located at $\unit[(0, 0, 0)]{m}$ and $\unit[(10, 0, 0)]{m}$, respectively. The channel between them is assumed to follow Rician fading with a Rician fading factor of \unit[3]{dB} and a path loss exponent of 2. The path gain at the reference distance of \unit[1]{m} is assumed to be \unit[-30]{dB}. The location of the $k$-th user is assumed to be uniformly distributed in a cuboidal region with opposite corners at $\unit[(7.5, 10, 0)]{m}$ and $\unit[(12.5, 15, 2)]{m}$. The channel between the RIS and the $k$-th user is also assumed to follow Rician fading with a Rician factor of \unit[3]{dB} and a path loss exponent of 2. The adversarial detector is located at a distance of \unit[8]{m}, with an azimuth angle of $60^\circ$ and an elevation angle $80^\circ$ relative to the reference element of the RIS. The sensing area is assumed to be at the same distance of \unit[8]{m} from the reference element of the RIS, extending over azimuth angles ranging from $45^\circ$ to $55^\circ$ and elevation angles ranging from $75^\circ$ to  $85^\circ$, with the $L$ points sampled uniformly in this area. Also, we average the performance metric over 250 independent channel realizations, with small-scale fading and user locations randomly varied.

We compare the performance of the proposed method with~\cite{2024_Magbool}, which aimed to maximize the sensing SINR while adhering to the other constraints, without protecting the sensing area. To enable a relevant comparison, we generalize~\cite{2024_Magbool} to sense $L$ locations and account for the presence of an adversarial detector's signal at the RIS. Since~\cite{2024_Magbool} uses a fixed RIS array configuration, we consider two configurations as benchmarks: sensing and communication with $\mathcal{R} = \mathcal{R}_1$, which we label as ``SC, $\mathcal{R} = \mathcal{R}_1$'', and sensing and communication with $\mathcal{R} = \mathcal{R}_2$, which we label as ``SC, $\mathcal{R} = \mathcal{R}_2$'', where $\mathcal{R}_1 = {1,\dots,40}$ and $\mathcal{R}_2 = {1,\dots,48}$. The purpose of including these benchmarks is to highlight the impact of neglecting the protection of the sensing area on the results.

We also include three versions of the proposed method. In the first version, we optimize $\mathbf{W}$, $\boldsymbol{\Phi}$, and $\mathbf{u}$ while setting $\mathcal{R} = \mathcal{R}_1$, which we label as ``Proposed, $\mathcal{R} = \mathcal{R}_1$.'' The second version is similar to the previous one but with $\mathcal{R} = \mathcal{R}_2$, which we label as ``Proposed, $\mathcal{R} = \mathcal{R}_2$.'' The final version applies Algorithm~\ref{algo_main} in full, which we label as "Proposed, adaptive $\mathcal{R}$."

\subsection{Convergence}

\begin{figure} 
         \centering
         \includegraphics[width=0.75\columnwidth]{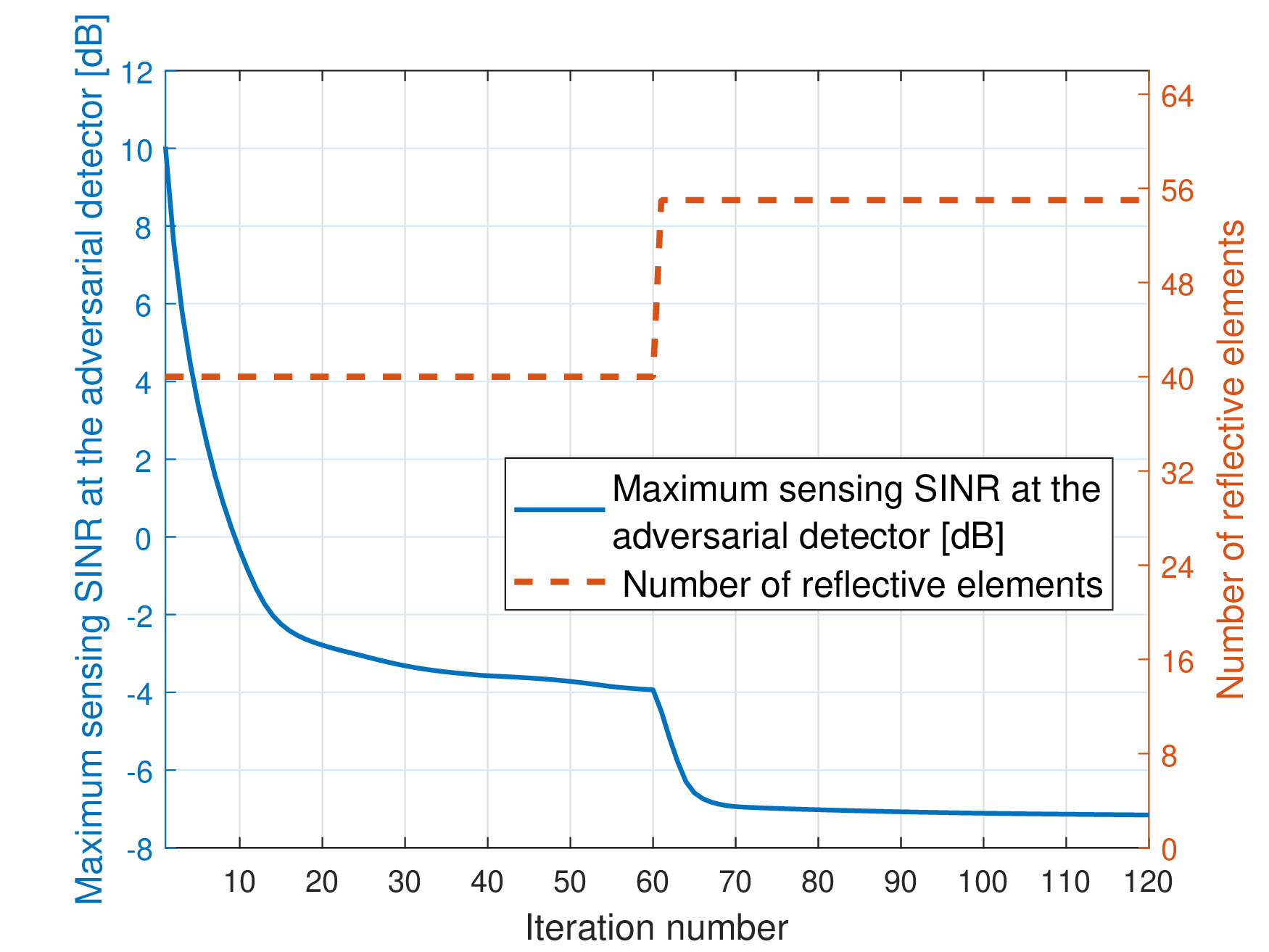}
         \vspace{0.05cm}
        \caption{Maximum sensing SINR at the adversarial detector (left y-axis) and number of reflecting elements (right y-axis) versus iteration number.}
        \label{fig:conv}
\end{figure}

\begin{figure*}
  \centering
  \begin{tabular}{c c c}
    \includegraphics[width=0.65\columnwidth]{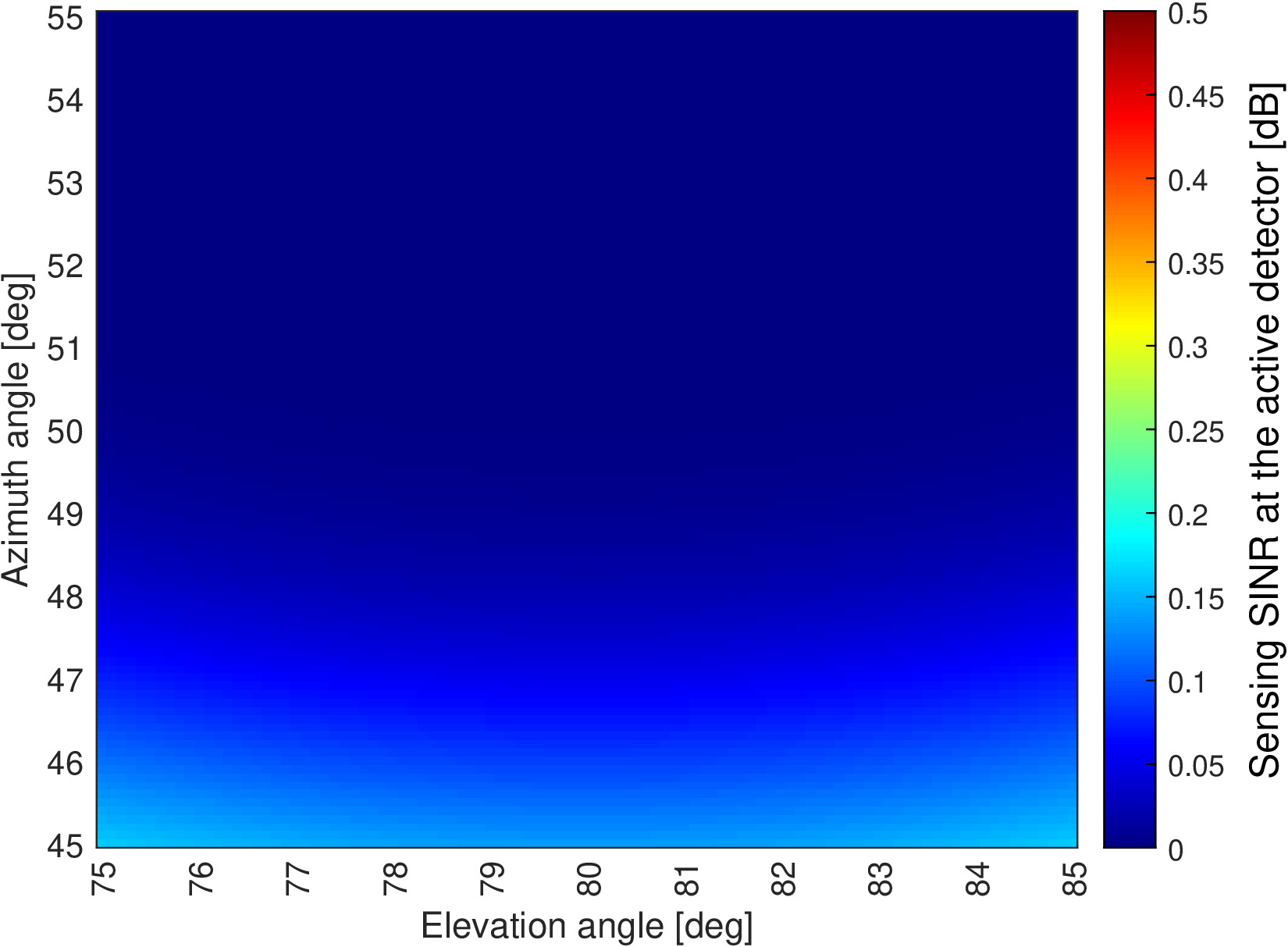} &
      \includegraphics[width=0.65\columnwidth]{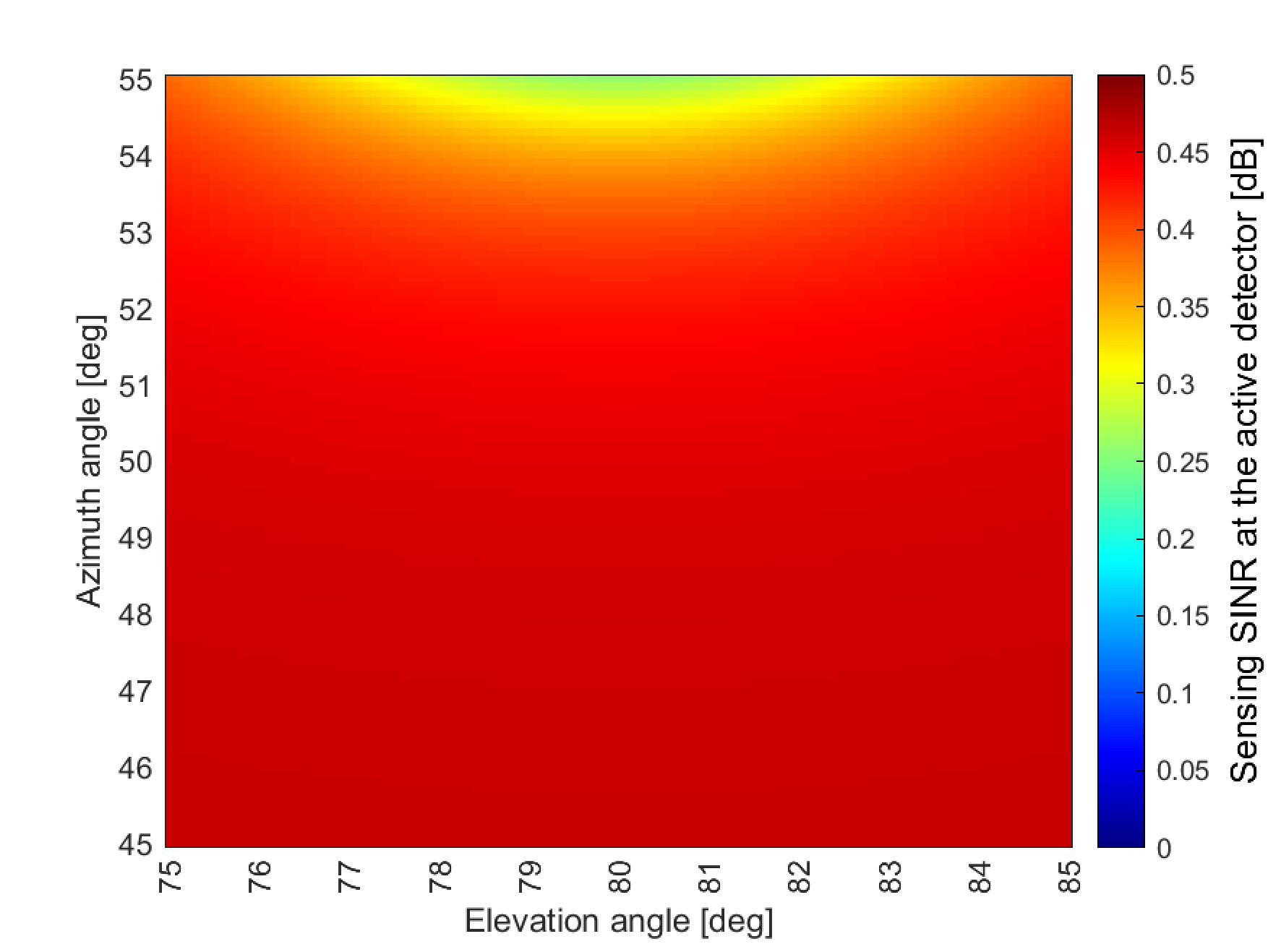} & \includegraphics[width=0.65\columnwidth]{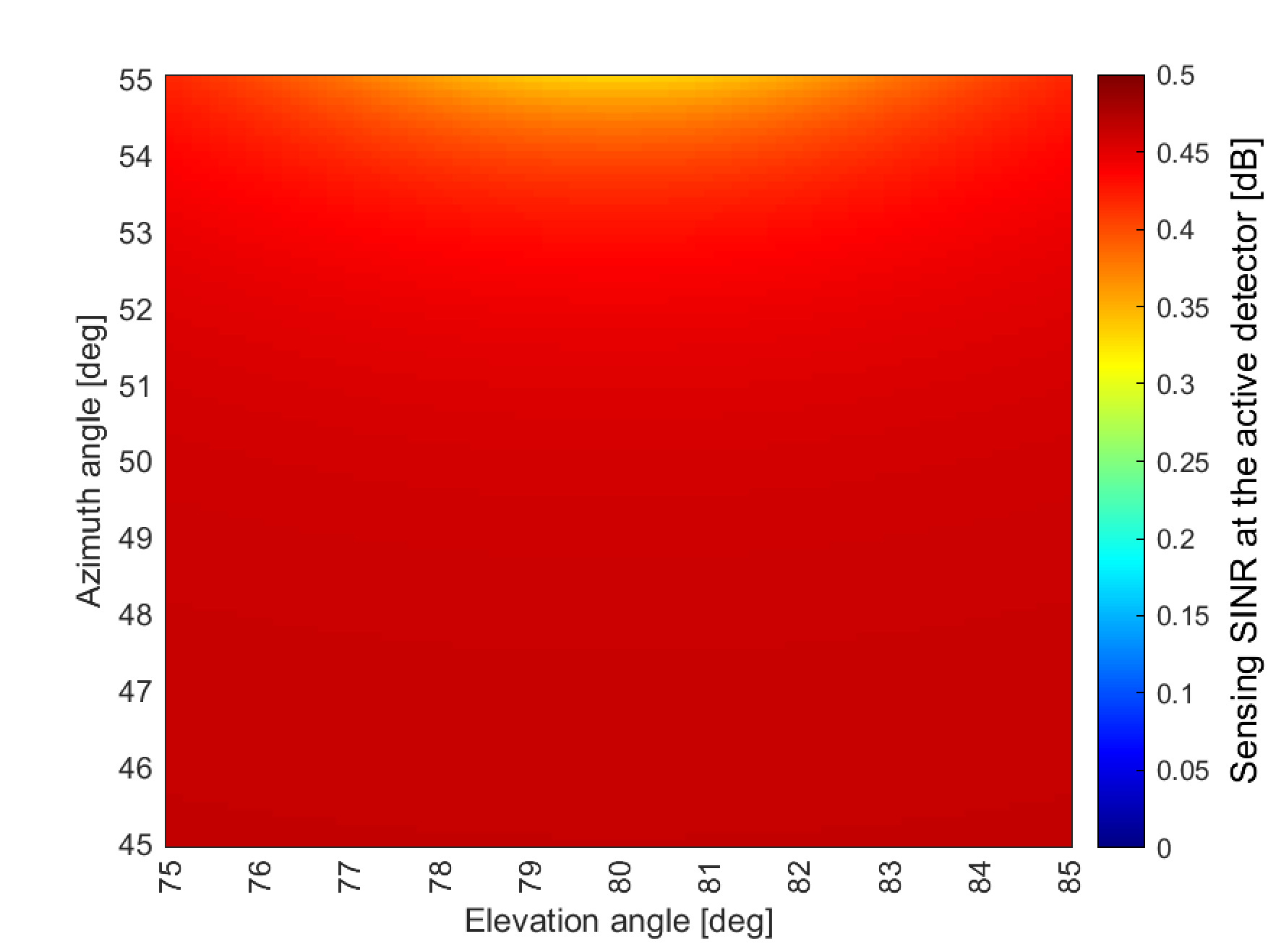} \\
      \scriptsize (a) SC, $\mathcal{R} = \mathcal{R}_2$.  &
      \scriptsize (b) Proposed, $\mathcal{R} = \mathcal{R}_2$. &
      \scriptsize (c) Proposed, adaptive $\mathcal{R}$. \\
  \end{tabular}
    \medskip
  \caption{Heatmaps of the FA probability at the adversarial detector within the sensing region for the baselines and the proposed method.}
  \label{fig:FA}
\end{figure*}

 \begin{figure*}
  \centering
  \begin{tabular}{c c c}
    \includegraphics[width=0.65\columnwidth]{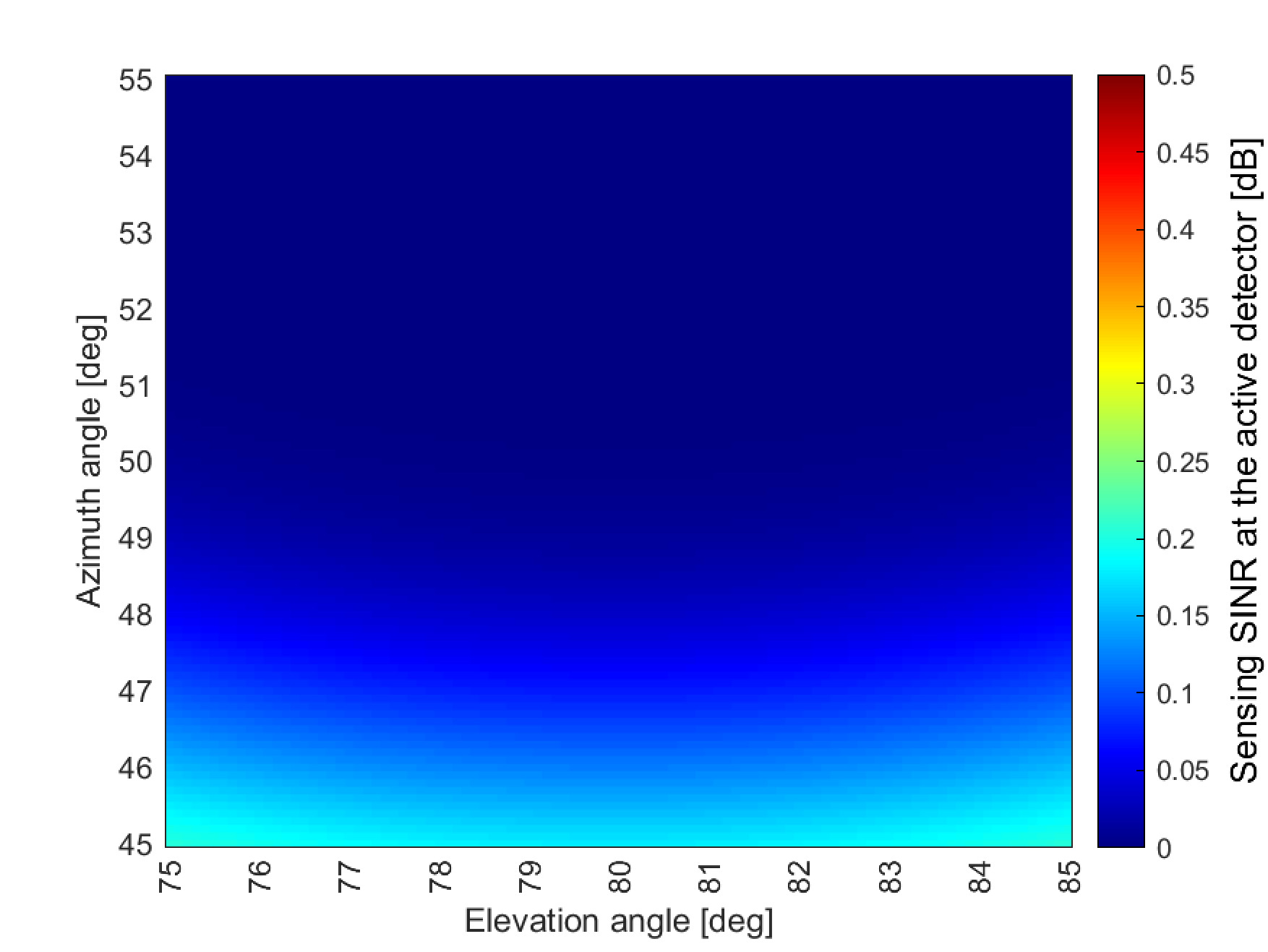} &
      \includegraphics[width=0.65\columnwidth]{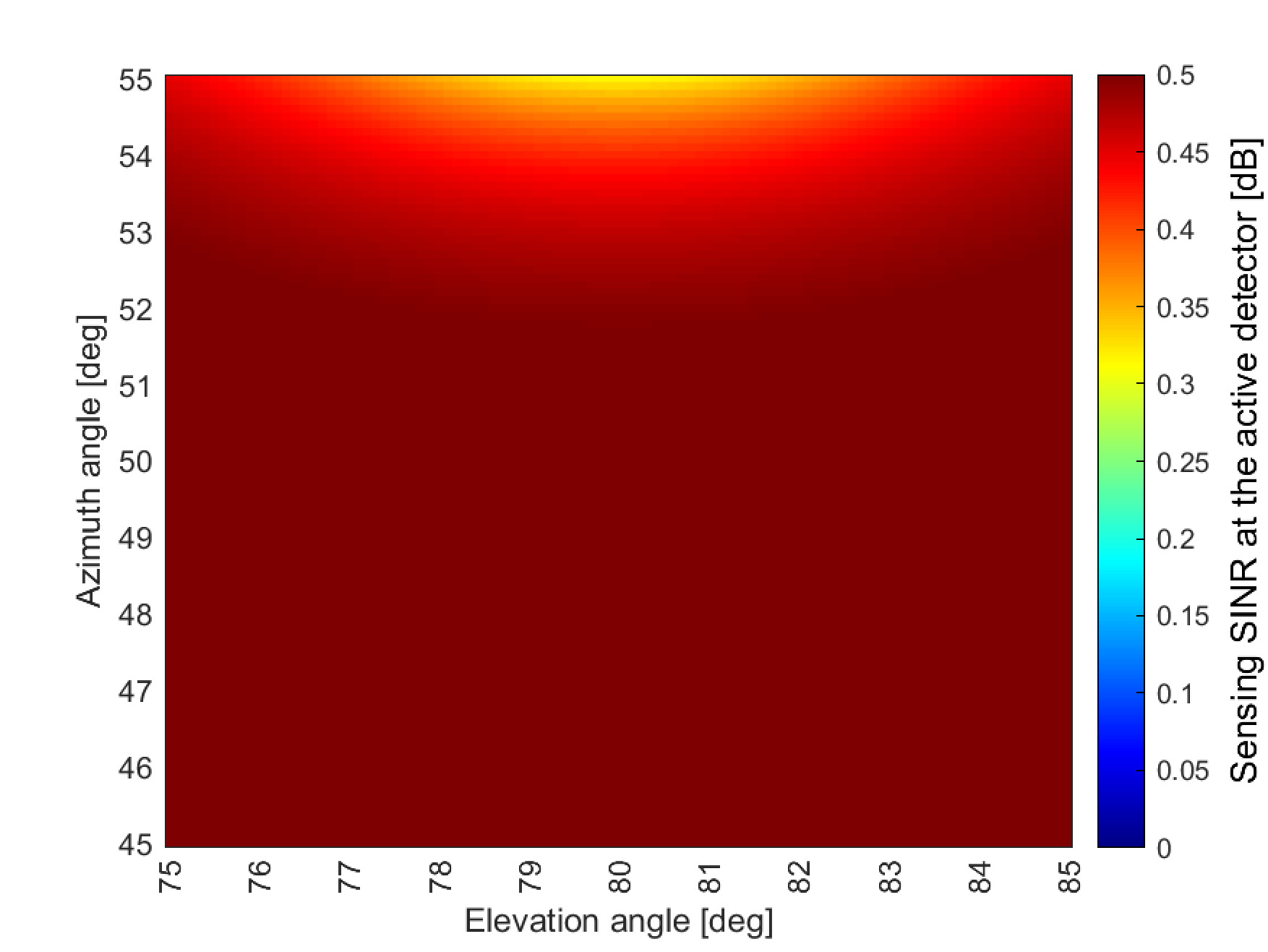} & \includegraphics[width=0.65\columnwidth]{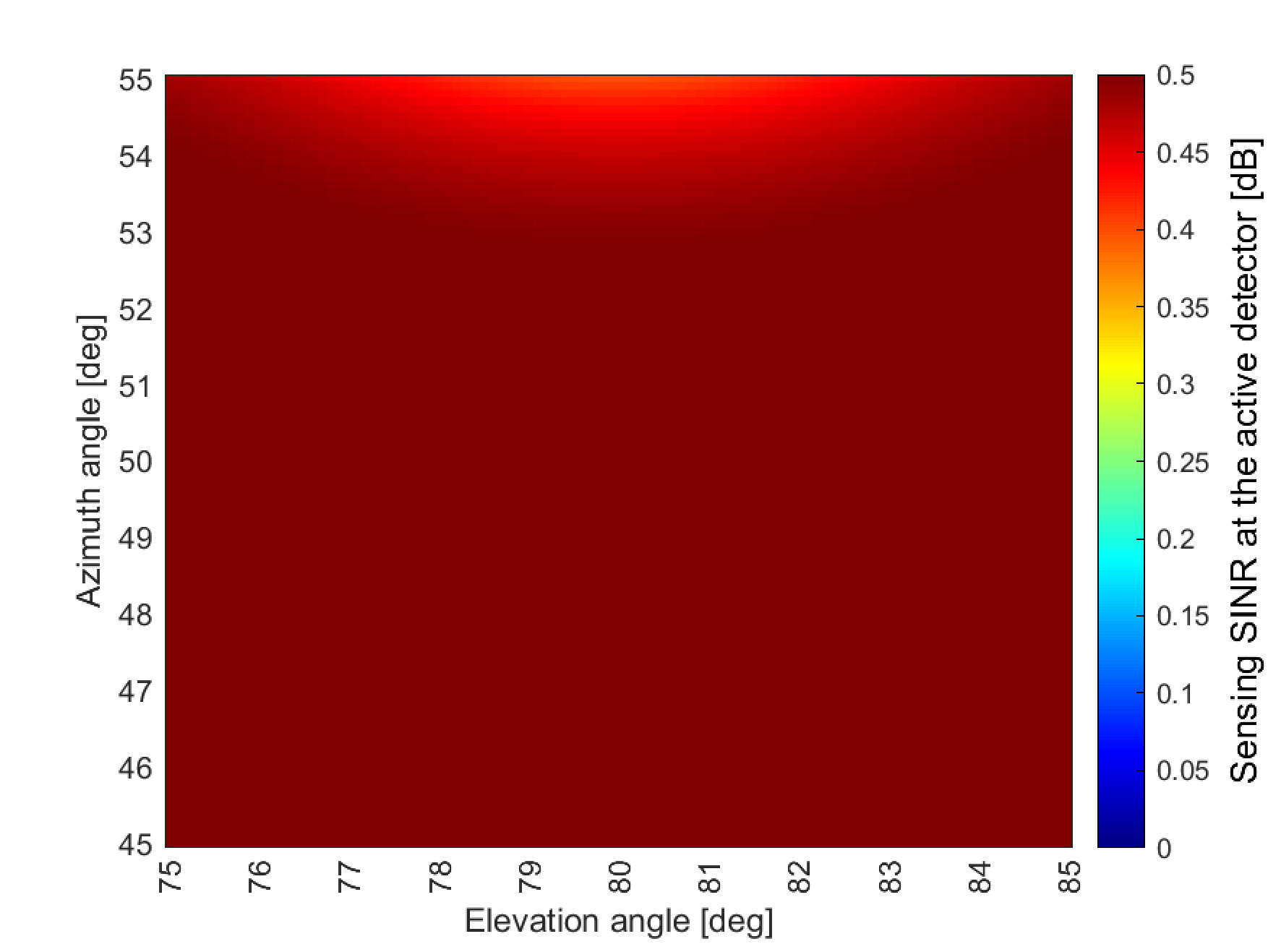} \\
 \scriptsize (a) SC, $\mathcal{R} = \mathcal{R}_2$.  &
      \scriptsize (b) Proposed, $\mathcal{R} = \mathcal{R}_2$. &
      \scriptsize (c) Proposed, adaptive $\mathcal{R}$. \\
  \end{tabular}
    \medskip
  \caption{Heatmaps of the MD probability at the adversarial detector within the sensing region for the baselines and the proposed method.}
  \label{fig:MD}
\end{figure*}

Fig.~\ref{fig:conv} shows the maximum sensing SINR at the adversarial detector among all locations at each iteration of the proposed algorithm. The number of reflecting elements is also plotted in the same figure using the right y-axis. It can be observed that the proposed algorithm successfully reduces the maximum sensing SINR from \unit[10]{dB} to \unit[-7]{dB}  upon re-application of Algorithm~\ref{algo1}. For the initial RIS element assignment, the maximum sensing SINR at the adversarial detector converges to \unit[-4]{dB} after applying Algorithm~\ref{algo1}. Subsequently, Algorithm~\ref{algo2} increases the number of reflecting elements from 40 to 55, further reducing the sensing SINR to \unit[-7]{dB}. Notably, Fig.~\ref{fig:conv} shows only two iterations of Algorithm~\ref{algo_main}, which is generally sufficient for convergence; in most cases, the number of reflecting elements remains unchanged after the first iteration of Algorithm~\ref{algo_main}.

\subsection{FA and MD Probabilities at the Adversarial Detector}

Fig.~\ref{fig:FA} and Fig.~\ref{fig:MD} illustrate heatmaps of the FA and MD probabilities at the adversarial detector at convergence within the sensing region for the baseline approaches and the proposed method, respectively (due to space limitations, we omit the figures corresponding to $\mathcal{R} = \mathcal{R}_1$). We assume here that the adversarial detector averages the energy of ten received echo signals to improve its detection performance (i.e., $T_\text{s} = 10$).

 {We observe from Proposition~\ref{col:1} that when $p_\ell = e^{-1}$ and $q_\ell = e$ (i.e., worst-case FA and MD probabilities for a single sample), averaging an infinitely large number of samples results in both $\bar{p}_\ell$ and $ \bar{q}_\ell$ converging to 0.5. This intuitively makes sense, as the received signal is dominated by noise and interference, and there is no way for the receiver to decide whether there is a target or not.}

\begin{figure} 
         \centering
         \includegraphics[width=0.75\columnwidth]{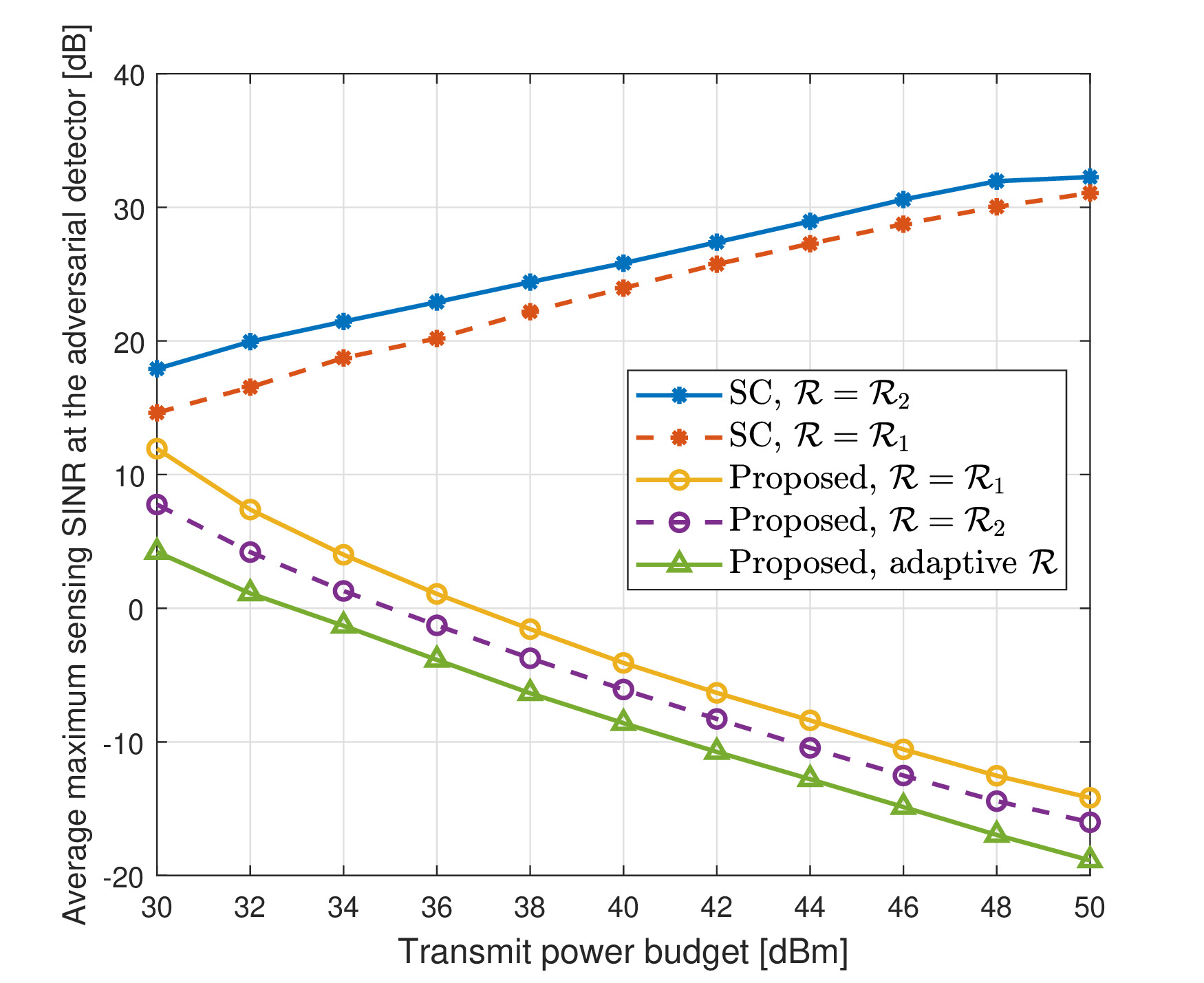}
         \vspace{0.05cm}
        \caption{Average maximum sensing SINR at the adversarial detector as the transmit power budget increases.}
        \label{fig:P_max}
\end{figure}

As shown in Fig.~\ref{fig:FA}~(a), when the protection of the sensing area is not a primary focus, the FA probability at the adversarial detector remains below 0.1 throughout the area, leaving the entire sensing area exposed to the adversarial detector. This situation can be significantly improved by minimizing the maximum SINR at the adversarial detector, as shown in Fig.~\ref{fig:FA}~(b); however, it can be seen that locations close to the adversarial detector have FA probabilities around 0.25. A further improvement is observed when the optimal division between reflecting and absorptive elements within the sensing area is carefully optimized, as depicted in Fig.~\ref{fig:FA}~(c), especially in locations close to the adversarial detector. The same argument can be applied to the MD probability in Fig.~\ref{fig:MD}. Ultimately, the proposed approach results in better FA and MD probabilities throughout the sensing region, compared to the baseline approaches.

\vspace{-0.3cm}
\subsection{Effect of Varying the Transmit Power Budget}

Fig.~\ref{fig:P_max} depicts the average maximum sensing SINR at the adversarial detector as the transmit power budget increases. Notably, when the sensing SINR is maximized without protection against the adversarial detector, the maximum sensing SINR at the adversarial detector increases. This occurs because more power is transmitted toward the sensing area by the RIS to detect the target. However, this increased power also benefits the adversarial detector, making it easier for it to detect the target's presence.

On the other hand, when protection is considered, the system becomes increasingly effective at shielding the sensing area from the adversarial detector as the transmit power budget rises. For instance, a reduction of approximately \unit[15]{dB} in the maximum sensing SINR is observed when the transmit power budget increases from \unit[30]{dBm} to \unit[40]{dBm}. Additionally, optimizing the assignments of the RIS elements can further decrease the sensing SINR at the adversarial detector by \unit[3]{dB} to \unit[5]{dB}.

\subsection{Effect of Varying the SINR Requirements}

\begin{figure} 
         \centering
         \includegraphics[width=0.75\columnwidth]{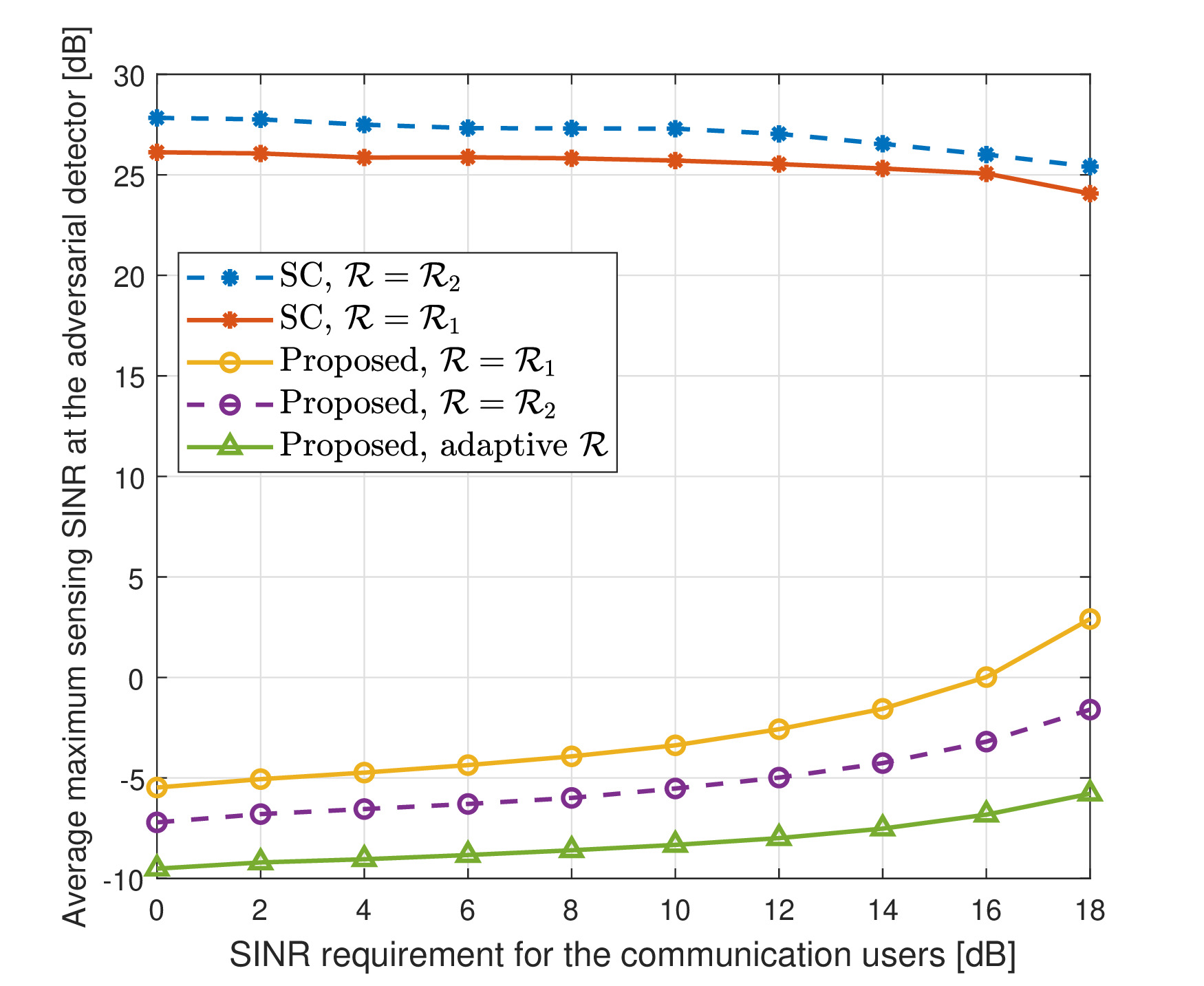}
         \vspace{0.05cm}
        \caption{Average maximum sensing SINR at the adversarial detector as a function of the SINR requirement for the communication users.}
        \label{fig:Gamma_c}
\end{figure}

\begin{figure} 
         \centering
         \includegraphics[width=0.75\columnwidth]{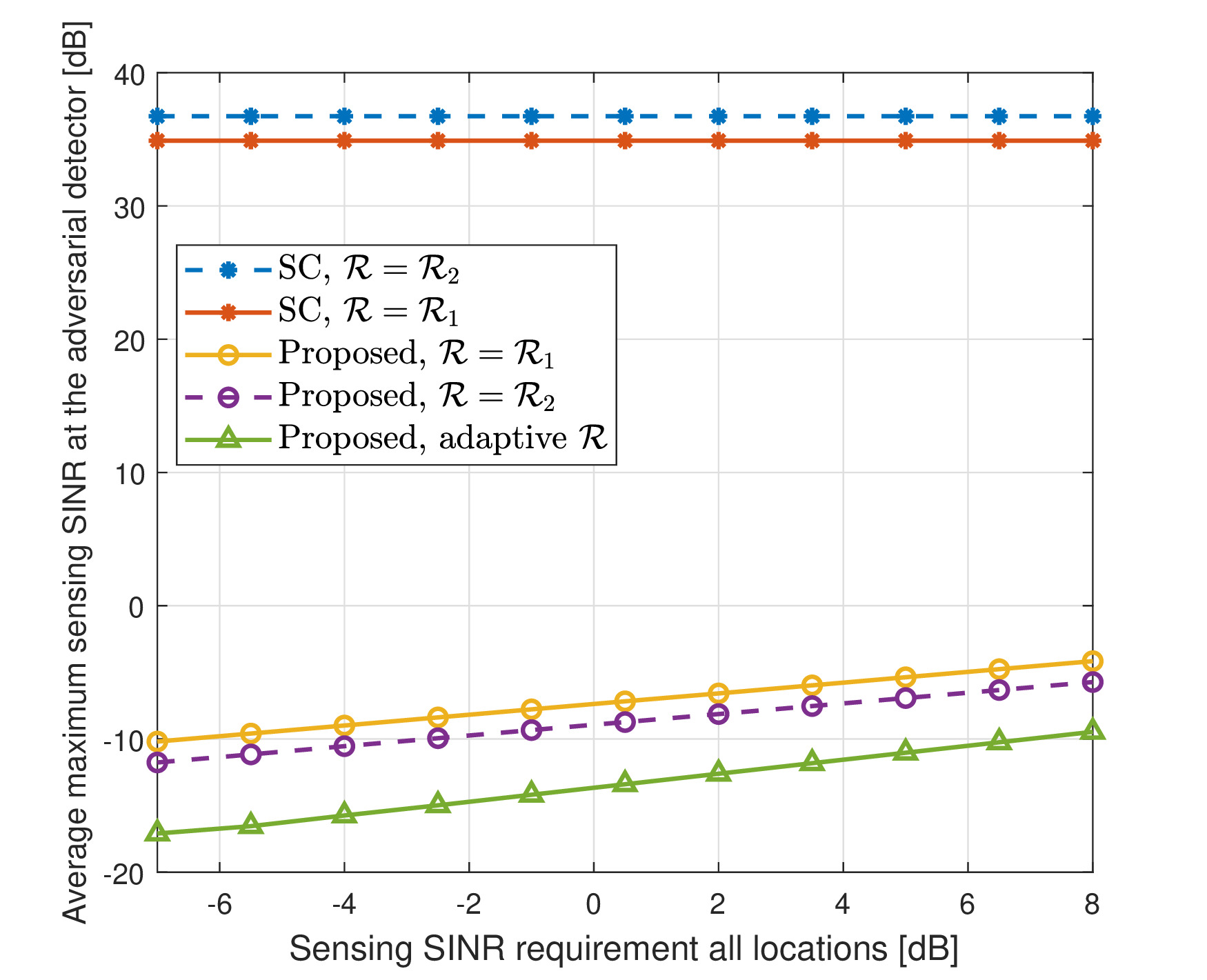}
         \vspace{0.05cm}
        \caption{Average maximum sensing SINR at the adversarial detector as a function of the sensing SINR requirement for all locations.}
        \label{fig:Gamma_s}
\end{figure}

Fig.~\ref{fig:Gamma_c} shows the average maximum sensing SINR at the adversarial detector as the SINR requirement for communication users varies. A higher communication SINR requirement can help reduce the sensing SINR at the adversarial detector, even without explicitly targeting protection. This occurs because more power is directed toward communication users. However, the sensing SINR remains relatively high in this scenario. 

Comparing the proposed method with fixed and adaptive RIS element functionality (fixed and adaptive set $\mathcal{R}$), we observe that thethe adaptive assignment is more effective in reducing the sensing SINR at the adversarial detector compared to a system with fixed RIS element functionality, especially at high values of the communication SINR requirement. This is because when a higher communication SINR is required, more elements are assigned as reflecting. This demonstrates that the proposed adaptive RIS element assignment is better suited to adapting to the system’s changing requirements. 

This capability is also evident in Fig.~\ref{fig:Gamma_s}, which illustrates the average maximum sensing SINR at the adversarial detector as the sensing SINR requirement varies across all locations. The figure shows that the proposed adaptive RIS setup provides better protection at low sensing SINR requirements, as more elements can function as reflecting elements, thereby generating greater interference for the adversarial detector.

\subsection{Effect of Varying the Number of RIS Elements}

 {
Fig.~\ref{fig:N_RIS} shows the average maximum sensing SINR at the adversarial detector as the number of RIS elements varies. To draw fair comparisons and to ensure that not too many elements are used for sensing, we replace the sets $\mathcal{R}_1$ and $\mathcal{R}_2$ with $\mathcal{R} =\{ 1,\dots ,0.625N\}$ and $\mathcal{R} = \{ 1,\dots ,0.75N\}$, respectively. This ensures that the ratio of reflecting elements to absorptive ones remains unchanged for different numbers of RIS elements.}

 {
When the sensing SINR is maximized, more RIS elements can direct a stronger signal toward the sensing area, which in turn significantly affects its protection. On the other hand, when the protection aspect is optimized, the sensing SINR at the adversarial detector is reduced. The benefit of adaptive RIS element assignment is evident in this case, especially with a larger number of RIS elements, as most elements can be assigned to serve reflection tasks. For instance, the adaptive assignment of RIS elements offers a \unit[2]{dB} improvement compared to the fixed assignment with $\mathcal{R} = { 1,\dots ,0.75N}$ at $N=64$. This improvement becomes even more pronounced at $N=512$, with a difference of \unit[6]{dB}.}

\begin{figure} 
         \centering
         \includegraphics[width=0.75\columnwidth]{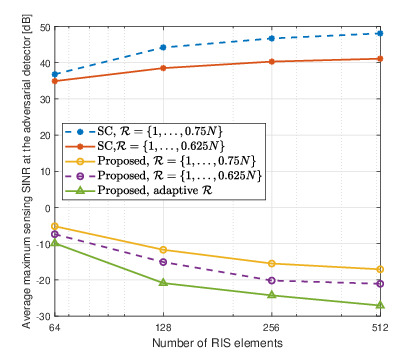}
         \vspace{0.05cm}
        \caption{Average maximum sensing SINR at the adversarial detector as the number of RIS elements increases.}
        \label{fig:N_RIS}
\end{figure}
\vspace{-0.3cm}
\section{Conclusion}  \label{sec:conc}
 This paper proposed an optimization framework for RIS-assisted ISAC while protecting the sensing region from an adversarial detector. Specifically, we minimized the maximum sensing SINR at the adversarial detector by jointly optimizing the transmit beamformer, RIS phase shifts, receive beamformer, as well as the division of RIS elements between reflection and sensing. The proposed AO framework, combined with a SCA approach, tackled this challenging problem while ensuring minimum sensing and communication SINR requirements. Simulation results demonstrate that the proposed method achieves a significant reduction in maximum sensing SINR at the adversarial detector, with optimal RIS element allocation providing an additional improvement over fixed RIS configurations.



%





\ifCLASSOPTIONcaptionsoff
  \newpage
\fi





\appendices
\vspace{-0.3cm}
\section{Proof of Proposition~\ref{prop1}} \label{AppndC}
First, we note that the sensing SINR  can be also expressed in terms of $\omega_{0}$ and $\omega_{\ell,1}$ as
\begin{equation}
     \gamma_{\text{s},\ell} (\omega_{0}, \omega_{\ell,1}) =   \frac{\omega_{\ell,1} - \omega_0}{ \omega_{0}} = \frac{\omega_{\ell,1} }{ \omega_{0}} - 1.
     \label{eq:gamma_s_omega}
\end{equation}

We can find the FA and MD probabilities for detecting the presence of a target at the $\ell$-th location as:
\begin{equation}
\begin{split}
         p_\ell   & = \text{Pr} ( |\mathbf{u}^\text{H} \mathbf{y}_{s,\ell} (t)|^2 \geq \bar{\omega}_{\ell} | \mathcal{H}_0) = 1 - \text{Pr} ( |\mathbf{u}^\text{H} \mathbf{y}_{s,\ell} (t)|^2  \\
        &  \leq \bar{\omega}_{\ell} | \mathcal{H}_0) = 1 - \int_0^{\bar{\omega}_{\ell}} \frac{1}{\omega_0} \text{exp} (- \frac{z}{\omega_0}) dz  \\
        & =\exp \Big(- \frac{\bar{\omega}_{\ell}}{\omega_0} \Big) = \Big( \frac{\omega_{1,\ell} }{\omega_0} \Big)^{- \frac{\omega_{1,\ell} }{\omega_{1,\ell} -  \omega_0}},
        \end{split}
        \label{eq:FAP}
\end{equation}
and
\begin{equation}
\begin{split}
        q_\ell & = \text{Pr} (|\mathbf{u}^\text{H} \mathbf{y}_{s,\ell} (t)|^2 \leq \bar{\omega}_{\ell} | \mathcal{H}_{1,\ell})= 1 - \exp  \Big(- \frac{\bar{\omega}_{\ell}}{\omega_{1,\ell}}\Big) \\
        & = 1 - \Big( \frac{\omega_{1,\ell} }{\omega_0} \Big)^{- \frac{ \omega_0}{\omega_{1,\ell} -  \omega_0}},
        \end{split}
        \label{eq:MDP}
\end{equation}
respectively. 

Substituting~\eqref{eq:gamma_s_omega} into~\eqref{eq:FAP} and~\eqref{eq:MDP}, we can express the FA and MD probabilities in terms of the sensing SINR as shown in~\eqref{eq:MDP_SINR} and~\eqref{eq:FAP_SINR}, respectively.

\vspace{-0.4cm}

\section{Proof of Lemma~\ref{lemma:1}} \label{AppndA}
We first present the following useful inequalities and identities:
\begin{enumerate}
    \item The first-order Taylor approximation of the squared norm of an arbitrary complex vector $\mathbf{v}$ around $\mathbf{v}_0$ is:
\begin{equation}
    \| \mathbf{v} \|_2^2  \geq 2 \Re \{ \mathbf{v}_0^\textsc{H}  \mathbf{v} \} - \| \mathbf{v}_0 \|_2^2.
    \label{R1}
\end{equation}
\item The following identities hold for any arbitrary complex vectors $\mathbf{v}_1$ and $\mathbf{v}_2$:
\begin{subequations}
\begin{align}
     \Re \{ \mathbf{v}_1^\textsc{H} \mathbf{v}_2 \}   & = \frac{1}{4} \Big( \| \mathbf{v}_1 + \mathbf{v}_2\|_2^2 - \| \mathbf{v}_1 - \mathbf{v}_2\|_2^2 \Big), \label{R2} \\
     \Im \{ \mathbf{v}_1^\textsc{H} \mathbf{v}_2 \}   & = \frac{1}{4} \Big( \| \mathbf{v}_1 -j \mathbf{v}_2\|_2^2 - \| \mathbf{v}_1 +j \mathbf{v}_2\|_2^2 \Big). \label{R3} 
\end{align}
\end{subequations}
\item The vectorization of a product of two matrices $\mathbf{V}_1 \in \mathbb{C}^{a \times b}$ and $\mathbf{V}_2 \in \mathbb{C}^{b \times c}$ is:
\begin{equation}
    \text{vec} (\mathbf{V}_1 \mathbf{V}_2) = ( \mathbf{V}_2^\text{T} \otimes \mathbf{I}_a) \text{vec} (\mathbf{V}_1 ).
\end{equation}
\end{enumerate}

We then re-express $\| \mathbf{a}^\textsc{H} \boldsymbol{\Phi} \mathbf{G}_\text{r} \mathbf{W}\|_2^2$ as
\begin{equation}
      \| \mathbf{a}^\textsc{H} \boldsymbol{\Phi} \mathbf{G}_\text{r} \mathbf{W}\|_2^2 = \| \boldsymbol{\phi}^\textsc{T} \mathbf{A}^* \mathbf{G}_\text{r} \mathbf{W}\|_2^2 = \| \mathbf{W}^\textsc{H} \mathbf{G}_\text{r}^\textsc{H} \mathbf{A} \boldsymbol{\phi}^*   \|_2^2,
\end{equation}
where $\boldsymbol{\phi} \triangleq \text{diag}(\boldsymbol{\Phi})$. Next, we utilize~\eqref{R1} to obtain the following lower bound:
  \begin{equation}
  \begin{split}
       \| \mathbf{W}^\textsc{H} \mathbf{G}_\text{r}^\textsc{H} \mathbf{A} \boldsymbol{\phi}^*   \|_2^2  & \geq 2 \Re \{ {\boldsymbol{\phi}_0^\textsc{T}} \mathbf{A}^* \mathbf{G}_\text{r}  {\mathbf{W}_0}   \mathbf{W}^\textsc{H} \mathbf{G}_\text{r}^\textsc{H} \mathbf{A} \boldsymbol{\phi}^*\}  \\
      & -\|   \underbrace{{\mathbf{W}_0}^\textsc{H} \mathbf{G}_\text{r}^\textsc{H} \mathbf{A}  \boldsymbol{\phi}_0^* }_{ \boldsymbol{\Xi}_3 (\mathbf{a}, \boldsymbol{\psi}_0)
     \boldsymbol{\psi}_0 } \|_2^2 
        \end{split}
        \label{eq:A1}
  \end{equation} 
 
 We use~\eqref{R2} to separate the optimization variables in the first term on the right-hand side of~\eqref{eq:A1} as
   \begin{equation}
  \begin{split}
       &  2 \Re \{ {\boldsymbol{\phi}_0^\textsc{T}} \mathbf{A}^* \mathbf{G}_\text{r}  {\mathbf{W}_0}   \mathbf{W}^\textsc{H} \mathbf{G}_\text{r}^\textsc{H} \mathbf{A} \boldsymbol{\phi}^*\}  = \frac{1}{2}  \| \mathbf{W} {\mathbf{W}_0^\textsc{H}}  \mathbf{G}_\text{r}^\textsc{H} \mathbf{A} {\boldsymbol{\phi}_0^*} \\ & +  \mathbf{G}_\text{r}^\textsc{H} \mathbf{A} \boldsymbol{\phi}^*\|_2^2  - \frac{1}{2}  \| \mathbf{W} {\mathbf{W}_0^\textsc{H}}  \mathbf{G}_\text{r}^\textsc{H} \mathbf{A} {\boldsymbol{\phi}_0^*} -  \mathbf{G}_\text{r}^\textsc{H} \mathbf{A} \boldsymbol{\phi}^*\|_2^2   \\
       &  = \frac{1}{2} \| \mathbf{W} {\mathbf{W}_0^\textsc{H}}  \mathbf{G}_\text{r}^\textsc{H} \mathbf{A} {\boldsymbol{\phi}_0^*} +  \mathbf{G}_\text{r}^\textsc{H} \mathbf{A} \boldsymbol{\phi}^*\|_2^2 \\
       & - \frac{1}{2} \| \underbrace{(\boldsymbol{\phi}_0^\textsc{H} \mathbf{A}^\textsc{T} \mathbf{G}_\text{r}^* \mathbf{W}_0^* \otimes \mathbf{I}_M) \text{vec} (\mathbf{W}) -  \mathbf{G}_\text{r}^\textsc{H} \mathbf{A} \boldsymbol{\phi}^*}_{\boldsymbol{\Xi}_1 (\mathbf{a}, \boldsymbol{\psi}_0)
     \boldsymbol{\psi}} \|_2^2.
        \end{split}
        \label{eq:A2}
  \end{equation}
  
  We use~\eqref{R1} to obtain a linear lower bound with respect to $\boldsymbol{\phi}$ and $\mathbf{W}$ in the first term on the right-hand side of~\eqref{eq:A2} as
  \begin{equation}
  \begin{split}
       & \frac{1}{2} \| \mathbf{W} {\mathbf{W}_0^\textsc{H}}  \mathbf{G}_\text{r}^\textsc{H} \mathbf{A} {\boldsymbol{\phi}_0^*} +  \mathbf{G}_\text{r}^\textsc{H} \mathbf{A} \boldsymbol{\phi}^*\|_2^2  \geq \Re \bigl\{ ( \mathbf{W}_0 {\mathbf{W}_0^\textsc{H}} \mathbf{G}_\text{r}^\textsc{H} \mathbf{A} \boldsymbol{\phi}_0^* \\
       & + \mathbf{G}_\text{r}^\textsc{H} \mathbf{A} {\boldsymbol{\phi}_0^*}  )^\textsc{H}  \big(  \mathbf{W} {\mathbf{W}_0} ^\textsc{H} \mathbf{G}_\text{r}^\textsc{H} \mathbf{A}^* {\boldsymbol{\phi}_0}^* +  \mathbf{G}_\text{r}^\textsc{H} \mathbf{A} \boldsymbol{\phi}^* \big) \bigl\} \\
       & - \frac{1}{2} \| \mathbf{W}_0 {\mathbf{W}_0^\textsc{H}} \mathbf{G}_\text{r}^\textsc{H} \mathbf{A} {\boldsymbol{\phi}_0^*} +  \mathbf{G}_\text{r}^\textsc{H} \mathbf{A} {\boldsymbol{\phi}_0^*} \|_2^2 \\
       & = \Re \bigl\{ \boldsymbol{\psi}_0^\textsc{H} \boldsymbol{\Xi}_2^\textsc{H} (\mathbf{a}, \boldsymbol{\psi}_0) \boldsymbol{\Xi}_2 (\mathbf{a}, \boldsymbol{\psi}_0) \boldsymbol{\psi} \bigl\}  - \frac{1}{2} \|\boldsymbol{\Xi}_2 (\mathbf{a}, \boldsymbol{\psi}_0) \boldsymbol{\psi}_0 \|_2^2. 
         \end{split}
         \label{eq:A3}
  \end{equation}
  By combining~\eqref{eq:A1},~\eqref{eq:A2}, and~\eqref{eq:A3}, we obtain the lower bound in~\eqref{eq:Lemm1}.

\vspace{-0.4cm}
 \section{Proof of Lemma~\ref{lemma:3}} \label{AppndB}
 \vspace{-0.2cm}
 We start by finding an upper bound on the square of the magnitudes of the real and imaginary parts of $  \mathbf{a}^\text{H} \boldsymbol{\Phi} \mathbf{G}_\text{r} \mathbf{w}_k $ as
\begin{subequations}
\begin{align}
     \big( \Re \{ \boldsymbol{\phi}^\text{T} \mathbf{A}^* \mathbf{G}_\text{r} \mathbf{w}_k \})^2 \leq \rho_k^2 \label{L1}, \\
      \big( \Im \{ \boldsymbol{\phi}^\text{T} \mathbf{A}^* \mathbf{G}_\text{r} \mathbf{w}_k \})^2 \leq \zeta_k^2. \label{L2}
\end{align}
\end{subequations}
 This indicates that
 \begin{equation}
     | \mathbf{a}^\text{H} \boldsymbol{\Phi} \mathbf{G}_\text{r} \mathbf{w}_k |^2 \leq \rho_k^2 + \zeta_k^2.
 \end{equation}
 Then from~\eqref{L1}, $\rho_k$ should be greater than or equal to both $\Re \{\mathbf{a}^\text{H} \boldsymbol{\Phi} \mathbf{G}_\text{r} \mathbf{w}_k \}$ and $-\Re \{ \mathbf{a}^\text{H} \boldsymbol{\Phi} \mathbf{G}_\text{r} \mathbf{w}_k \}$. Similarly, from~\eqref{L2}, $\kappa$ should be greater than or equal to both $\Im \{\mathbf{a}^\text{H} \boldsymbol{\Phi} \mathbf{G}_\text{r} \mathbf{w}_k \}$ and $-\Im \{ \mathbf{a}^\text{H} \boldsymbol{\Phi} \mathbf{G}_\text{r} \mathbf{w}_k \}$. We then use~\eqref{R2} and~\eqref{R3} to obtain the following conditions on $\rho$ and $\kappa$:
\begin{subequations}
\label{eq:tau_omega} 
\begin{align}
            \rho_k & \geq \pm \frac{1}{4} \big( \| \boldsymbol{\phi}^* + \mathbf{A}^* \mathbf{G}_\text{r} \mathbf{w}_k \|_2^2 - \| \boldsymbol{\phi}^* - \mathbf{A}^* \mathbf{G}_\text{r} \mathbf{w}_k \|_2^2 \big), \label{eq:tau1} \\
            \zeta_k &  \geq \pm \frac{1}{4} \big( \| \boldsymbol{\phi}^* - j\mathbf{A}^* \mathbf{G}_\text{r} \mathbf{w}_k \|_2^2 - \|\boldsymbol{\phi}^* + j\mathbf{A}^* \mathbf{G}_\text{r} \mathbf{w}_k \|_2^2 \big), \label{eq:omega1}
\end{align} 
\end{subequations}%
The first term on right-hand side of~\eqref{eq:tau1} (with the positive sign) can be re-written as $\begin{bmatrix}
    \mathbf{A}^*\mathbf{G}_\text{r} \otimes \mathbf{J}_k^\textsc{T} &  \mathbf{I}_{N_\text{r}}
\end{bmatrix} \boldsymbol{\psi} = \mathbf{\Pi}_k (\mathbf{a}) \boldsymbol{\psi}$. On the other hand, the second term on right-hand side of~\eqref{eq:tau1} (with the positive sign) is still concave. To tackle this, we use~\eqref{R1} as follows:
\begin{equation}
\begin{split}
    & \frac{1}{4} \|  \boldsymbol{\phi}^* - \mathbf{A}^* \mathbf{G}_\text{r} \mathbf{w}_k \|_2^2  \geq \frac{1}{2} \Re \{ ( \boldsymbol{\phi}_0^* - \mathbf{A}^* \mathbf{G}_\text{r} \mathbf{w}_{0,k} )^\textsc{H} (\boldsymbol{\phi}^* \\
    &  - \mathbf{A}^* \mathbf{G}_\text{r} \mathbf{w}_{k}) \}  - \frac{1}{4} \| \boldsymbol{\phi}_0^* - \mathbf{A}^* \mathbf{G}_\text{r} \mathbf{w}_{0,k} \|_2^2 \\
    & = \frac{1}{2} \Re \{ \boldsymbol{\psi}_0^\textsc{H}  \boldsymbol{\Pi}^\textsc{H} (- \mathbf{a}) \boldsymbol{\Pi} (- \mathbf{a}) \boldsymbol{\psi} \} - \frac{1}{4} \| \boldsymbol{\Pi} (- \mathbf{a}) \boldsymbol{\psi}_0 \|_2^2.
    \end{split}
    \label{eq:B3}
\end{equation}
Then, by combining~\eqref{eq:tau1} and~\eqref{eq:B3}, we can obtain the expression~\eqref{eq:in11} with the positive sign. In a similar manner, we can use~\eqref{R1} to tackle the concave terms in~\eqref{eq:tau1} (with the negative sign) to obtain the expression with the negative sign in~\eqref{eq:in11}. Also, the expression~\eqref{eq:in21} can be obtained in a similar manner from~\eqref{eq:omega1}.


\balance 
\bibliographystyle{IEEEtran}
\bibliography{IEEEabrv,Bibliography}
%


\end{document}